\documentclass{article}    

\usepackage{etoolbox}
\newtoggle{full}
\toggletrue{full}    
\iftoggle{full}{
\usepackage[margin=1in]{geometry}
\usepackage{amsthm, amssymb}
\newtheorem{theorem}{Theorem}[section]
\newtheorem{lemma}[theorem]{Lemma}

}{

\usepackage{makeidx}

}%
\usepackage{amsmath,amssymb}
\usepackage{multicol}
\newcommand{\proofbox}{\hfill$\square$}

\title{LP-based Covering Games with Low Price of Anarchy}

\iftoggle{full}{
\author{Georgios Piliouras\thanks{School of Electrical \& Computer Engineering, Georgia
  Institute of Technology, Atlanta, GA. Supported by AFOSR projects
  FA9550-09-1-0538. {\tt
    georgios.piliouras\char'100 ece.gatech.edu}}\and
Tom\'a\v s Valla\thanks{Institute for Theoretical Computer Science, Faculty of Mathematics \&
Physics, Charles University, Prague, Czech Republic. Supported by the GAUK Project 66010 and
by ITI, under grant P202/12/G061.
{\tt valla\char'100 kam.mff.cuni.cz}
}
\and
L\'aszl\'o A. V\'egh\thanks{College of Computing, Georgia Institute of Technology.
Supported by NSF Grant CCF-0914732.
{\tt lvegh\char'100 cc.gatech.edu} }}
}
{
\author{Georgios Piliouras\inst{1}\thanks{Supported by AFOSR projects FA9550-09-1-0538.}
\and Tom\'a\v s Valla\inst{2}\thanks{Supported by GAUK Project 66010 and ITI grant number P202/12/G061.}
\and  L\'aszl\'o A. V\'egh\inst{3}\thanks{Supported by NSF Grant CCF-0914732.}
}
\institute{%
School of Electrical \& Computer Engineering, Georgia
  Institute of Technology, Atlanta, GA. \email{georgios.piliouras@ece.gatech.edu}
\and
Institute for Theoretical Computer Science, Faculty of Mathematics \&
Physics, Charles University, Prague, Czech Republic.
\email{valla@kam.mff.cuni.cz}
\and
College of Computing, Georgia Institute of Technology, Atlanta, GA. \email{lvegh@cc.gatech.edu}
}

}


\def\qed{\ifhmode\unskip\nobreak\fi\hfill
  \ifmmode\square\else$\square$\fi}
\def\R{{\mathbb R}}

\def\C#1{{\mathcal #1}}

\let\epsilon=\varepsilon

\def\I{\it\aftergroup\/}

\def\text#1{\hbox{#1}}


\begin{document}

\maketitle

\begin{abstract}
We present a new class of vertex cover and set cover games. The
price of anarchy bounds 
match the best known constant factor approximation guarantees for the
centralized optimization problems   for linear and also
for  submodular costs -- in contrast to all previously studied covering
games, where the price of anarchy cannot be bounded by a constant (e.g. \cite{Cardinal06,Cardinal10,Escoffier10,buchbinder08,Balcan11}).
In particular, we describe a vertex cover game with a price of anarchy of $2$.
The rules of the games capture the structure of the linear programming
relaxations of the underlying
optimization problems, and our
bounds are established by analyzing these relaxations.
Furthermore, for linear costs we exhibit linear time best response
dynamics that converge to these almost optimal Nash equilibria. These
dynamics mimic the classical greedy approximation algorithm of
Bar-Yehuda and Even \cite{Baryehuda81}.
\end{abstract}

\section{Introduction}

Combinatorial optimization has for several decades  dictated the landscape of algorithm design.
The extent of its impact can be appreciated by the fact that almost by default the main
judging criterion of a polytime algorithmic solution is the approximation guarantee it offers, regardless of other parameters that may affect the applicability of the solution in practice (simplicity of implementation, robustness to input errors, etc.)

One such limiting assumption 
is the existence of an omnipotent centralized authority that has access to all
the relevant information and has the power to enforce any solution of its choice. Over the last decade, the soundness of such assumptions has increasingly come into question
following a number of paradigm-shifting  socioeconomic events such as the rapid rise of the Internet, the painful realization of the extent of inter-connectivity of the
global economy as well as the emergence of global
sustainability concerns.

Algorithmic game theory strives for global optimization
in such decentralized settings that consist of self-interested individuals.
In these more challenging scenarios, tractability can be compromised along
two largely independent axes: due to individual incentive issues or due to computability issues.

\iftoggle{full}{\smallskip}{}

\noindent{\bf Price of Anarchy.} The competition between individual incentives and social optimality is of fundamental concern in distributed
systems as it can lead to highly inefficient outcomes.
The price of anarchy literature \cite{poa} examines exactly what are the worst case repercussions of such a policy. Formally, price of anarchy is defined as the maximal ratio between the
social cost of a Nash equilibrium and that of the global optimal configuration. Intuitively, a low price of anarchy implies that upon converging to a socially stable outcome, the quality of the acquired solution is almost optimal from a central optimization perspective.

Unfortunately, in many cases of interesting games the price of anarchy is
prohibitively high. Vertex cover, due to its prominent position within combinatorial optimization, has been studied in the context of game theory from different approaches, all of which so far have shared this limiting
characteristic.

Specifically,  Cardinal and
Hoefer in \cite{Cardinal06} define a vertex cover game where the edges of a network are
owned by $k$ agents.
An agent's goal is to have  each of his edges
supplied by a service point at least one of its endpoints. There
is a cost $c(v)\ge 0$ associated to building a service point at vertex
$v$. The strategy of an agent is a vector consisting of offers to the
vertices. Service points will be installed at vertices where the total
offer exceeds the cost of the vertex.
Similar games are defined by Buchbinder et al. \cite{buchbinder08} and by
Escoffier et al. \cite{Escoffier10} for the more general set cover problem.

A different approach was followed by Balcan et
al. \cite{Balcan11}. Here the agents are the vertices of the graph, and
their strategies are deciding whether they open a service point.
If opening a service point, vertex $v$ incurs a cost $c(v)$. If
he decides not to open, he has to pay a penalty for all edges incident
to $v$ whose other endpoints are uncovered.

The price of anarchy is 
$\Theta(k)$ in
\cite{Cardinal06} and $\Theta(n)$ in \cite{Balcan11}. Indeed, if
the underlying network is a star, and each edge is owned by a
different agent in the first case, we get Nash equilibria with all
leaves being service points. These guarantees are significantly
worse than the ones available in the centralized setting, where
simple factor $2$-approximation algorithms exist.

In contrast, in our paper, we shall present a simple vertex cover game with a price of anarchy $2$.
As in \cite{Balcan11}, the agents are the vertices, and the regulations delegate the responsibility of covering every
edge of the network to its two endpoints: both incur a high penalty if 
the edge is left uncovered.
The difference from the setting
of \cite{Balcan11} is that those who  open a service
point can demand compensation from their neighbors.
This is justified since if $u$ opens a service point, every neighbor
$v$ benefits from this as the common responsibility of covering
$uv$ is taken over by $u$.

In the description, we use 
intuitive terminology
of a Mafia (service points) which provides ``security'' (covers
edges).
The vertices may choose to join Mafia or to
remain civilians. Each edge of the graph has to be ``secured'', that
is, at least one endpoint must be in Mafia. 
For agent $v$, there is an initial cost $c(v)$ to join  Mafia. 
Mafiosi can collect
ransoms as the price of security of the incident edges:
if a vertex $v$
chooses to be a mafioso, his strategy also includes a ransom vector, so
that the total ransom he demands from his neighbors is $c(v)$. 
It is a one-shot
game and mafiosi can ransom
both their civilian and mafioso neighbors.

If $v$ is a civilian,  he has to pay to his neighbors in the
Mafia all ransom they demand. Furthermore, if there is an incident
uncovered edge $uv$, that is, $u$ is also a civilian, both of them have to pay a huge penalty.
In contrast, if $v$ is a mafioso, he has to pay $c(v)$ for joining, and
 he receives whatever he can collect from ransoms. However, mafiosi
ransomed excessively obtain a protected status: 
if the total demand
from $v$ is more than $c(v)$, he satisfies only a proportional fraction of the demands.
It is important to note that the payoff function is defined locally: besides his own strategy, the payoff of an agent
depends only on the strategies of
agents at distance at most $2$ from him (i.e. immediate neighbors and neighbors of neighbors).
Also note that if $M$ is a vertex cover, then the total utility of the
agents is $-c(M)$. Consequently,   an
optimal solution to the optimization problem gives a social optimum of
the game.

\iftoggle{full}{
Our approach avoids bad Nash equilibria that are possible in \cite{Cardinal06} and \cite{Balcan11}.
As an example, consider the vertex cover game on a star with all
vertices having cost 1.
In the models of
\cite{Cardinal06} and \cite{Balcan11} there exists a Nash equilibrium
where
the leaves form the vertex cover.
In our model, if all leaves are mafiosi, then all of them would demand
ransom from the central agent, who would then have a strong incentive to
join the mafia and obtain the protected status.
It can be verified that the
only Nash equilibria 
correspond to outcomes where the central vertex and at most one leaf
are in the Mafia.
}{}

As a different interpretation of the game above, consider a road network with the
vertices representing cities. The maintenance of a road must be
provided by a facility at one of the endpoints. The cost of opening the
facility dominates the operating cost: if city $v$ decides to open one
at cost $c(v)$, it is able to maintain all incident roads. As a
compensation, the cities can try to recollect the opening cost by
asking contributions from the neighboring cities. A city without a
facility has to pay all contributions he is asked to pay. However, if
a city opens a facility, its liability is limited and has to satisfy
demands only up to his opening cost, $c(v)$.

\iftoggle{full}{\smallskip}{}

Our approach can be  extended to the hitting set problem, which is equivalent to the set cover problem.
We are given a hypergraph $G=(V,{\cal E})$, and a cost function
$c:V\rightarrow \R_+$ on the vertices. Our aim is to find a minimum
cost subset $M$ of $V$ intersecting every hyperedge in $\cal E$. This problem is known to be approximable within a factor of $d$,
the maximum size of a hyperedge.
In the corresponding Mafia game, the hyperedges shall be considered as
clubs in need of security. A mafioso can assign ransoms to the clubs he is a
member of, that will be distributed equally to all other members of
the club.

We shall prove that for the vertex cover and hitting set games, the
price of anarchy is 2 and $d$, respectively.
Bar Yehuda and Even gave a simple primal-dual algorithm with this
guarantee in 1981 \cite{Baryehuda81}. No better constant factor
approximation has been given eversince.
Furthermore, assuming the Unique Games Conjecture, Khot and Regev \cite{Khot03} proved that
 the hitting set problem cannot be approximated by any constant factor
 smaller than $d$.

As a further extension, we also investigate the submodular hitting set (or set cover) problem, that has received significant
attention recently.
The goal is to find a hitting set $M$ of a
hypergraph minimizing $C(M)$ for a  submodular set function $C$ on the
ground set. Independently, Koufogiannakis and Young
\cite{Koufogiannakis09greedy} and Iwata and Nagano \cite{Iwata09}
gave $d$-approximation algorithms. Our game approach extends
even to this setting, with the same price of anarchy $d$. This
involves a new agent, the Godfather, who's strategy consists of
setting a budget vector in the submodular base polyhedron of
$C$. Otherwise, the game is essentially the same as  the (linear)
hitting set game.

The main results of the paper can be summarized as follows.
\begin{theorem}
The Mafia games for vertex cover, hitting set and submodular hitting
set always have pure Nash equilibria, and the price of anarchy is 2 for
vertex cover and $d$ for (submodular) hitting set.
\end{theorem}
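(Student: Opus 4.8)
The three cases share one skeleton, so the plan is to prove everything through the linear programming relaxation and its packing dual, viewing vertex cover as the case $d=2$ of hitting set and submodular hitting set as the linear case once the Godfather linearises the cost. A pure profile is summarised by the set $M$ of mafiosi together with the ransom vector; by the conservation computation already noted in the excerpt, whenever $M$ is a cover the social cost equals $c(M)$, so both the social optimum and every equilibrium are compared against the integer optimum $c(M^*)$ and against $\mathrm{OPT}_{LP}=\max\{\sum_e y_e : \sum_{e\ni v} y_e\le c(v),\ y\ge 0\}$, which by weak duality satisfies $\mathrm{OPT}_{LP}\le c(M^*)$.

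\textbf{Existence.} First I would exhibit one pure equilibrium explicitly from the Bar--Yehuda--Even primal--dual procedure \cite{Baryehuda81}: take $M$ to be the set of vertices that become tight and let each mafioso demand ransoms equal to the dual charges it raised on its incident (hyper)edges, so the demands sum to $c(v)$. I would then check case by case that no single agent gains from deviating: a civilian who leaves only creates an uncovered (hyper)edge or faces demands above his cost, and a mafioso is already fully funded so quitting cannot help. The protected-status rule is exactly what kills the star-type bad equilibria of \cite{Cardinal06,Balcan11}, since an over-ransomed agent pays at most $c(v)$ and therefore always weakly prefers to join. The same construction carries over to the submodular case once the Godfather fixes a base $b\in B(C)$ of the cost that is tight on $M$, i.e.\ $b(M)=C(M)$, reducing the instance to the linear game with vertex costs $b(v)$.

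\textbf{Price of anarchy.} The heart of the argument is to show that an arbitrary pure equilibrium $(M,r)$ induces a feasible dual certifying $c(M)\le d\cdot \mathrm{OPT}_{LP}$. Concretely I would (i) argue $M$ is a cover, since otherwise an endpoint of an uncovered (hyper)edge suffers the large penalty and strictly prefers to join; (ii) extract from the ransoms a feasible packing $y$ with $\sum_{e\ni v}y_e\le c(v)$ for every $v$; and (iii) show each mafioso is \emph{funded}, $\sum_{e\ni v}y_e\ge c(v)$ for $v\in M$. Granting (ii)--(iii), the standard counting
\[
c(M)=\sum_{v\in M}c(v)\le \sum_{v\in M}\sum_{e\ni v}y_e=\sum_e |e\cap M|\,y_e\le d\sum_e y_e\le d\,\mathrm{OPT}_{LP}\le d\,c(M^*)
\]
finishes the bound, giving $2$ for vertex cover. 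In the submodular case $c(v)$ is replaced by $b(v)$; one uses $b(M)=C(M)$ for a base $b\in B(C)$ tight on $M$ and the inequality $\sum_e y_e\le C(M^*)$, which holds because $M^*$ hits every club and $b(M^*)\le C(M^*)$.

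\textbf{Main obstacle.} I expect steps (ii) and (iii) to be the crux, because feasibility and fundedness pull in opposite directions and both hinge on the protected-status and equal-splitting rules. For feasibility one must rule out a civilian (or unsaturated mafioso) being charged more than $c(v)$ in total: if the demands on $v$ exceeded $c(v)$, then $v$ would strictly gain by joining the Mafia and acquiring protection, contradicting equilibrium; the delicate point is that this deviation must stay profitable after accounting for the ransoms $v$ could himself recollect, which is exactly where best-response optimality of the ransom vectors enters. For fundedness one uses the mirror-image deviation, a mafioso whose collected ransom falls short of $c(v)$ and all of whose responsibilities are shared would profit by leaving, so that at equilibrium the two inequalities meet and complementary slackness with the LP holds. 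Finally I would confirm the bounds are tight (price of anarchy exactly $2$ and $d$) with small instances, for example a single edge with unit costs whose both-endpoints-in-$M$ profile is an equilibrium of cost $2\,\mathrm{OPT}$, together with its hyperedge analogue.
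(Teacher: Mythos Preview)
Your existence argument matches the paper's: build an equilibrium from a Bar--Yehuda--Even complementary pair and check that no unilateral deviation helps. That part is fine, including the reduction of the submodular case to the linear one via the Godfather's base vector.

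The price-of-anarchy argument has a genuine gap. You want to read off from an arbitrary equilibrium a dual $y$ that is simultaneously \emph{feasible} ($\sum_{e\ni v}y_e\le c(v)$ for all $v$) and \emph{tight on $M$} ($\sum_{e\ni v}y_e= c(v)$ for $v\in M$), i.e.\ a complementary pair $(M,y)$. Your justification for feasibility is that a civilian whose total demand exceeds $c(v)$ would strictly prefer to join and become protected. That is false. Take vertices $u,v,w,a,b$, edges $uv,uw,va,wb$, costs $c(u)=\tfrac32$, $c(v)=c(w)=c(a)=c(b)=1$, Mafia $M=\{v,w,a,b\}$, ransoms $r(v,u)=r(w,u)=r(a,v)=r(b,w)=1$. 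This is a Nash equilibrium in which the civilian $u$ faces $D(u)=2>c(u)$: both neighbours $v,w$ already have $D=c$, so any ransom $u$ places makes them protected, and the best $u$ can collect is $\tfrac67<c(u)$, giving $U(u)=-\tfrac{15}{7}<-2$. Hence the dual you ``extract from the ransoms'' cannot satisfy $\sum_{e\ni v}y_e\le c(v)$ at $u$; if you rescale to force feasibility, tightness on $M$ is lost, and the last step $\sum_e y_e\le \mathrm{OPT}_{LP}$ in your chain fails.

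The paper avoids this by \emph{not} seeking a complementary pair. It first shows there are no protected mafiosi at equilibrium (so $D(v)\le c(v)$ for $v\in M$), and then bounds civilians only by $D(v)\le 2c(v)$ for vertex cover, respectively $D(v)\le \tfrac{d}{d-1}c(v)$ for hitting set. The hitting-set bound is the delicate step and needs a strategy-stealing deviation in which the entering civilian copies the incumbents' ransoms proportionally and one controls how much the old mafiosi can become overcharged; your outline does not supply this argument. With $y(uv)=r(u,v)+r(v,u)$ (or $y(S)=\sum_{m\in M\cap S}r(m,S)$) one then gets $\sum_{e\ni v}y_e\le d\,c(v)$ for every $v$, so $\tfrac1d y$ is feasible, while $\sum_e y_e=\sum_{m\in M}c(m)=c(M)$ follows directly from the budget constraint on ransoms. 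The bound is therefore $c(M)=d\cdot\sum_e \tfrac1d y_e\le d\,\mathrm{OPT}_{LP}$, using the \emph{identity} $\sum_e y_e=c(M)$ in place of your tightness condition (iii).
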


Recent work of Roughgarden et al. \cite{rough09,rough10,rough11} has shown that the majority of positive results in price of anarchy literature can be reduced to a specific common set of structural assumptions. In contrast,
in our work, we use a novel approach  by exploring connections to the
LP relaxations
of the underlying centralized optimization problems. This connection
raises  interesting questions about the limits of its applicability.
%
%
%
%
%

\iftoggle{full}{\smallskip}{}

\noindent{\bf Convergence and Complexity of Dynamics.}
\iftoggle{full}{
The world of 
 decentralized competition is not immune to the results of
computational complexity. Hence, a low price of anarchy although promising
does not necessarily yield a usable outcome in the means of the game
dynamics,
when agents sequentially have the possibility to change their
strategies for a better one.
The reasons for these inconsistencies fall in one of two possible
categories: either non-convergence of the dynamics to a Nash
equilibrium or too slow convergence. 

Even in a very simple game settings, with a constant number of agents
and strategies and no computational complexity issues, it
could be the case that games exhibit only highly unstable Nash
equilibria. In such settings, numerous learning dynamics, even if they
start off from a state close to a Nash equilibrium, they diverge away
from it fast \cite{daskalakis10,paperics11}.

On the other hand, as the games grow in size, even when there exist
simple decentralized dynamics which provably converge to Nash
equilibria, it is not necessarily the case that this convergence is
achieved within polynomial time. For example, in the case of general
congestion games, although best response dynamics always converge to a
Nash equilibrium, finding any sample equilibrium (even via a
centralized algorithm) has been shown to be PLS-hard
\cite{fabrikant04}, implying that any decentralized dynamic is bound
to fail as well in worst case instances.}{%
A low price of anarchy although promising
does not necessarily yield a usable outcome in the means of the game
dynamics,
when agents sequentially have the possibility to change their
strategies for a better one.
The reasons for these inconsistencies fall in one of two possible
categories: either non-convergence of the dynamics to a Nash
equilibrium  \cite{daskalakis10,paperics11} or too slow convergence \cite{fabrikant04}. 
}

In our covering games, we first show that even in
simple instances, round robin best response dynamics\footnote{These
  are the dynamics where each agent takes turn playing his best
  response in a cyclic ordering according to some fixed permutation.}
may end in a loop.  However, this can be simply fixed by a slight
modification of the payoff.
We introduce a secondary utility, that does not affect the price of
anarchy results, but merely instigates the mafiosi to use more fair
(symmetric) ransoms: $r(u,v)=r(v,u)$. With this secondary objective,
we show that actually a single round of best response dynamics under a
simple selection rule of the next agent results
in a Nash-equilibrium.
This dynamics in fact simulates the
Bar-Yehuda--Even algorithm.
An analogous dynamics is shown in the case of 
hitting set. Moreover, these dynamics can be interpreted in a
distributed manner, enabling several agents to change their
strategies at the same time.

\iftoggle{full}{\smallskip}{}
In our games, the set of strategies is infinite as ransoms can be
arbitrary real numbers. However, if the vertex weights are integers, we
can restrict possible ransoms to be integers as well. All results of
the paper straightforwardly extend to this finite game.

\smallskip

\iftoggle{full}{\subsection{Related work}}{\noindent\textbf{Related work.}}
The basic set cover games in
\cite{buchbinder08}, \cite{Escoffier10} and  \cite{Balcan11}  fall into the class of
congestion games \cite{rosenthal73}. In the models of
\cite{buchbinder08}, \cite{Escoffier10}, in the hitting set terminology,
the agents are the hyperedges that choose a vertex to cover them, and
the cost of the vertex is divided among them according to some rule.
\cite{buchbinder08} investigates the influence of a central authority
that can influence choices by taxes and subsidies in a best response
dynamics;
\cite{Escoffier10} studies different cost sharing rules of the
vertices (``local taxes''). However, none of these methods  achieve
a constant price of anarchy.  The model of
 \cite{Balcan11} can achieve a good equilibrium by assuming  a
central authority
that propagates
information on an optimal solution to a fraction of the agents.
In contrast to
\cite{buchbinder08} and \cite{Balcan11}, our model is defined locally,
without assuming a central authority.

Cardinal and Hoefer \cite{Cardinal10} define a general class of
covering games, including
 the vertex
cover game \cite{Cardinal06}, and also
 the selfish network design game by Anshelevich
et al. \cite{Anshelevich08}.
The game is based on a covering problem given by a linear integer program.
Variables represent resources, and the agents correspond to certain
sets of constraints they have to satisfy. An agent can offer money
for resources needed to satisfy her constraints.
From each variable, the number of units covered by the
total offers of the agents will be purchased and can be used by all
agents simultaneously to satisfy their constraints, regardless to
their actual contributions to the resource.

\iftoggle{full}{
In the vertex cover or hitting set game, the resources are the service points and the
set of constraints belonging to the agents express that every (hyper)edge
owned by them has to be covered. In the model of
 \cite{Anshelevich08},
agent $i$ wants to connect a set of terminals $S_i$ in a graph
$G=(V,E)$ with edge costs $c$. Hence the variables represent the edges
of the graph and the constraints belonging to agent $i$ enforce the
connectivity of $S_i$.

}{}

Our games can be seen as the {\em duals} of these coverings games.
That is, the agents correspond to the variables, and are
responsible for the satisfaction of the constraints containing
them. If a constraint is left unsatisfied, the
participating variables get punished. Also, a variable may require
compensation (ransoms) from other variables participating in the same
constraints. These compensations will correspond to a dual solution in
a Nash equilibrium.
We hope that our approach of studying dual covering games might be
extended to a broader class of problems, with the price of anarchy
matching the integrality gap.

\iftoggle{full}{Our result and the above papers are focused on noncooperative covering games.
A different line of game theoretic study is focused on cost sharing mechanism,
e.g. \cite{Deng99,Devanur05,Immorlica08,Fang07,li05,li10}.}{}

The performance of behavioral dynamics in games  and specifically establishing fast convergence to equilibria of good quality has been
the subject of intensive recent research \cite{Kleinberg09multiplicativeupdates,kleinberg2011load,Shah10}.
The importance of such results that go beyond the analysis of performance
of Nash equilibria has been stressed in
\cite{paperics11} where it has been shown that even in very simple games with constant number of agents and strategies, the performance of simple learning dynamics can be arbitrarily different than (any convex combination of) the payoffs of Nash equilibria.

\iftoggle{full}{
\smallskip

The rest of the paper is organized as follows. Section~\ref{sec:main} defines the
Mafia games for vertex cover, hitting set, and submodular hitting set,
and proves the existence of Nash equilibria and gives price of anarchy
bounds. Section~\ref{sec:dynamics} shows that certain simple dynamics
rapidly converge to Nash equilibrium for vertex cover and for hitting
set. Section~\ref{sec:concl} discusses possible further research
directions.}{%
The rest of this extended abstract is organized as follows. Section~\ref{sec:main} defines the
Mafia games for vertex cover, hitting set, and submodular hitting
set. The proofs of existence of Nash equilibria and the price of
anarchy bound is given only for vertex cover and deferred to the
Appendix for the other two problems.
Section~\ref{sec:dynamics} discusses results on dynamics, and 
Section~\ref{sec:concl}  possible further research
directions.
}

\section{The Mafia games and Price of Anarchy bounds}\label{sec:main}
\subsection{Vertex cover}\label{sec:vc}
Given a graph $G=(V,E)$, let $c:V\to\R^+$ be a cost function on the
vertices.
In the {\em vertex cover problem}, the task is to find a minimum cost
set $M\subseteq V$ containing at least one endpoint of every edge in $E$.
For a vertex $v\in V$, let $N(v)=\{u: uv\in E\}$ denote the set of its
neighbors.

\iftoggle{full}{\smallskip}{}

{\noindent\textbf{Game definition. }}
The {\em Mafia Vertex Cover Game} is a one-shot game  on the agent set $V$.
The basic strategy of an agent is to decide being a civilian or a
mafioso. The set of civilians shall be denoted by $C$,  the set of
mafiosi (Mafia) by $M$.
For civilians, no further decision has to be made, while for
mafiosi, their strategy also contains a ransom vector.
Each mafioso $m\in M$ can demand ransoms from his neighbors totaling
 $c(m)$. The ransom demanded from a neighbor $u\in
N(m)$ is $r(m,u)\ge 0$, with $\sum_{u\in N(v)}r(m,u)=c(m)$.
The strategy profile $\C S=(M,C,r)$ thus consists of the sets of
mafiosi and civilians, and the ransom vectors.

Let us call $c(v)$ the \emph{budget} of an agent $v\in V$, and let
$T>\sum_{v\in V} c(v)$ be a huge constant.
Let $D(v)=\sum_{m\in M} r(m,v)$ be the demand asked from the agent
$v\in V$.

Let us now define the payoffs for a given strategy profile $\C S$.
For a civilian $v\in C$, let $\operatorname{Pen}(v)=T$ if $v$ is
incident to an uncovered edge, that is $C\cap N(v)\neq \emptyset$, and
$\operatorname{Pen}(v)=0$
otherwise. The utility of  $v\in C$ is%
\iftoggle{full}{
\[
U_{\C S}(v) =  - D(v) - \operatorname{Pen}(v).
\]
}{
$U_{\C S}(v) =  - D(v) - \operatorname{Pen}(v)$.
}

If $v\in M$ and the total demand from $v$
is $D(v) > c(v)$ (i.e. $v$ is asked too much),
we call $v$ \emph{protected} and denote the set of protected mafiosi by $P\subseteq M$.
The real amount of money that the protected mafioso $p\in P$ pays to his neighbors is scaled
down to
$\frac{c(p)}{D(p)} r(u,p)$.
Let $F^-(v)=\min\{D(v),c(v)\}$ be the total amount the mafioso $v$ pays
for ransom.
\iftoggle{full}{
Let 
$$
F^+(v) = \sum_{u\in N(v)\setminus P} r(v,u)
+ \sum_{u\in N(v)\cap P} \frac{c(u)}{D(u)} r(v,u)
$$}{

Let
$F^+(v) = \sum_{u\in N(v)\setminus P} r(v,u)
+ \sum_{u\in N(v)\cap P} \frac{c(u)}{D(u)} r(v,u)$
}
denote the income of $v\in M$ from the ransoms.
Then the utility of a mafioso $v\in M$ is defined as
\iftoggle{full}{%
\[U_{\C S}(v) = -c(v)+F^+(v) - F^-(v).\]
}{%
$U_{\C S}(v) = -c(v)+F^+(v) - F^-(v).$

}
This means $v$ has his initial cost $c(v)$ for entering the Mafia,
receives full payment from civilians and unprotected mafiosi,
receives reduced payment from protected mafiosi, and pays the full payment to
his neighboring mafiosi if $v$ is unprotected, or reduced payment if $v$ is protected.

\iftoggle{full}{\smallskip}{}

\noindent\textbf{The existence of pure Nash equilibria.}
Pure Nash equilibria are (deterministic) strategy outcomes such that no agent can improve her payoff by unilaterally changing her strategy.
We will start by establishing that our game always exhibits such states.
The following is the standard linear programming relaxation of vertex
cover along with its dual.
\begin{multicols}{2}{%
\noindent%
\begin{align}
\min& \sum_{v\in V} c(v) x(v) \tag{P-VC}\label{prog:PVC}\\
x(u)+x(v)&\ge 1 \quad \forall uv\in E\notag\\
x&\ge0\notag
\end{align}%
}{%
\begin{align}
\max& \sum_{uv\in E} y(uv) \tag{D-VC}\label{prog:DVC}\\
\sum_{uv\in E} y(uv) &\le c(u) \quad \forall u\in V\notag\\
y&\ge0\notag
\end{align}%
}%
\end{multicols}
For a feasible dual solution $y$ we say
that the vertex $v\in V$ is {\em tight} if $\sum_{uv\in E}y(uv)=c(v)$.
We call the pair $(M,y)$ a {\em complementary pair} if $M$ is a vertex
cover, $y$ is a feasible dual solution, and each $v\in M$ is tight with
respect to $y$. 
\nottoggle{full}{The following well-known claim states that a
complementary solution provides good approximation.}{}

\begin{lemma}\label{claim:vc-approx}
If $(M,y)$ is a complementary pair, then $M$ is a 2-approximate
solution to the vertex cover problem. \nottoggle{full}{\proofbox}{}
\end{lemma}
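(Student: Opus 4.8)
The plan is a direct primal--dual counting argument: bound $c(M)$ by twice the dual objective of $y$, and then invoke weak LP duality. First I would exploit the tightness built into a complementary pair. Since every $v\in M$ is tight, $\sum_{uv\in E} y(uv)=c(v)$, and summing over all mafiosi gives
\[
c(M)=\sum_{v\in M}c(v)=\sum_{v\in M}\sum_{uv\in E}y(uv).
\]

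The crux is to re-read this double sum by edges rather than by vertices. A fixed edge $uv\in E$ contributes the term $y(uv)$ once for each of its two endpoints that happens to lie in $M$, hence at most twice; because $y\ge 0$, dropping the restriction ``endpoint in $M$'' can only increase the total. Therefore
\[
\sum_{v\in M}\sum_{uv\in E}y(uv)\le 2\sum_{uv\in E}y(uv),
\]
which yields $c(M)\le 2\sum_{uv\in E}y(uv)$.

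It remains to compare the dual objective with the optimum. As $y$ is a feasible solution of the dual program (D-VC), weak duality bounds its objective $\sum_{uv\in E}y(uv)$ by the optimal value of the primal relaxation (P-VC); and since (P-VC) is a relaxation of the integral vertex cover problem, that optimal value is at most the cost $\mathrm{OPT}$ of a minimum vertex cover. Chaining the inequalities gives $c(M)\le 2\,\mathrm{OPT}$, and since $M$ is itself a vertex cover by hypothesis, it is a $2$-approximate solution.

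I do not expect a genuine obstacle here---this is precisely the Bar-Yehuda--Even bound. The one point deserving explicit care is the ``at most twice'' step, where nonnegativity of $y$ is exactly what lets me pass from the edge--endpoint count to the full dual objective; every other inequality is immediate from feasibility and LP duality.
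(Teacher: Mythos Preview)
Your proposal is correct and matches the paper's own argument essentially line for line: use tightness to write $c(M)=\sum_{v\in M}\sum_{uv\in E}y(uv)$, observe each edge is counted at most twice so this is $\le 2\sum_{uv\in E}y(uv)$, and conclude via weak duality. The only difference is that you spell out the weak-duality step explicitly, whereas the paper leaves it implicit.
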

\iftoggle{full}{
\begin{proof}
The primal objective is at most twice the dual objective, as
\[
\sum_{v\in M}c(v)=\sum_{v\in M}\sum_{u\in N(v)}y(uv)\le 2\sum_{uv\in E}y(uv).
\]
The inequality follows as each edge $uv$ is counted at most twice.
\end{proof}
We shall show that the}{The} simple approximation algorithm by Bar-Yehuda
and Even \cite{Baryehuda81} returns a complementary pair, and
therefore has approximation factor 2.
\nottoggle{full}{We start from $y=0$ and $M=\emptyset$. In each step, we pick an
  arbitrary uncovered edge $uv$, and raise $y(uv)$ until $u$ or $v$
  becomes tight. We add the tight endpoint(s) to $M$ and iterate with a next
  uncovered edge. It is straightforward that the algorithm returns a complementary pair
$(M,y)$.
}{}%
Our next lemma proves that a complementary pair provides a
Nash equilibrium.
\begin{lemma}\label{lemma:vc-dualne}
Let $(M,y)$ be a complementary pair, and consider the strategy profile
where the agents in $M$ form the Mafia and $C=V\setminus M$ are
the civilians. For $u\in M$, define $r(u,v)=y(uv)$ for every $v\in N(u)$.
Then the strategy profile $\C S=(M,C,r)$ is a Nash equilibrium.
\end{lemma}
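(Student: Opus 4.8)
The plan is to verify the Nash condition by first understanding the profile $\C S$ completely and then checking that none of the three possible unilateral moves helps. The structural backbone of the whole argument is that \emph{nobody is protected} in $\C S$: since $y$ is dual-feasible, for every $v\in V$ we have $D(v)=\sum_{m\in N(v)\cap M} y(mv)\le \sum_{uv\in E} y(uv)\le c(v)$, so no agent satisfies $D(v)>c(v)$, hence $P=\emptyset$ and $F^-(v)=D(v)$ for every mafioso. From this I would read off the utilities. For a civilian $v\in C$, the vertex-cover property of $M$ means $v$ has no civilian neighbor, so $\operatorname{Pen}(v)=0$ and $U_{\C S}(v)=-D(v)$. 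For a mafioso $v\in M$, tightness gives $\sum_{u\in N(v)} r(v,u)=\sum_{u\in N(v)} y(vu)=c(v)$, and because $P=\emptyset$ the income is $F^+(v)=\sum_{u\in N(v)} r(v,u)=c(v)$; thus $U_{\C S}(v)=-c(v)+c(v)-D(v)=-D(v)$ as well. So every agent currently earns exactly $-D(v)$.

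The one monotonicity fact I need is that in \emph{any} profile a mafioso's income never exceeds the total ransom he is allowed to demand: each neighbor $u$ pays him either $r(v,u)$ (if $u\notin P$) or the scaled amount $\frac{c(u)}{D(u)}r(v,u)\le r(v,u)$ (if $u\in P$), so that $F^+(v)\le\sum_{u\in N(v)} r(v,u)=c(v)$.

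Now I would rule out the three deviations, using throughout that a player's own move cannot change the demand $D(v)$ placed on him (that demand is generated by the ransoms of the \emph{other} mafiosi, and $v\notin N(v)$), so the deviator stays unprotected with $F^-=D(v)$. If a civilian $v$ turns mafioso with any ransom vector, his payoff becomes $-c(v)+F^+(v)-D(v)$, which by the monotonicity bound is at most $-c(v)+c(v)-D(v)=-D(v)$, no improvement; note there is also no penalty to escape, as civilians already pay $\operatorname{Pen}=0$. If a mafioso $v$ stays in the Mafia but re-chooses his ransoms, the same bound caps his payoff at $-D(v)$, which $\C S$ already attains. Finally, if a mafioso $v$ defects to being a civilian, his payoff becomes $-D(v)-\operatorname{Pen}(v)$ with $D(v)$ unchanged: if $v$ has a civilian neighbor then the edge between them is now uncovered, forcing $\operatorname{Pen}(v)=T$ and a catastrophic loss, while if all neighbors are mafiosi the incident edges remain covered, $\operatorname{Pen}(v)=0$, and the payoff is unchanged. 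In every case no agent strictly gains, so $\C S$ is a Nash equilibrium.

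The step I expect to be most delicate is the mafioso-to-civilian deviation, where one must notice that $D(v)$ is invariant under the deviator's own move and combine this with the vertex-cover property to pin down the penalty. The income bound $F^+\le c(v)$ is what drives the other two cases, and its only subtlety is correctly accounting for the scaling factor on protected neighbors.
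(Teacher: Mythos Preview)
Your argument is correct and follows essentially the same approach as the paper: you establish $D(v)\le c(v)$ from dual feasibility to conclude there are no protected mafiosi, compute that every agent receives utility $-D(v)$, and then bound any deviation by $-c(v)+F^+(v)-D(v)\le -D(v)$ via $F^+(v)\le c(v)$. The paper compresses the three deviation cases into a couple of sentences and does not spell out the mafioso-to-civilian case, whereas you treat all three explicitly; otherwise the proofs are the same.
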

\begin{proof}
Since $D(v)\le c(v)$ for all players, there are no protected mafiosi.
If $v$ is a civilian, his
payoff is $-D(v)$. He would not get a protected status if he entered
the Mafia as $D(v)\le c(v)$, and thus his payoff would be
$-c(v)+F^+(v)-D(v)\le -D(v)$ by arbitrary choice of ransoms.
If $v$ is a mafioso, he has $F^+(v)=c(v)$ as none of his neighbors is
protected. Thus his utility is $-D(v)$, the maximum he can obtain for
any strategy.\nottoggle{full}{\proofbox}{}
%
%
%
\end{proof}
%
%
\iftoggle{full}{
The existence of a complementary pair is provided by the algorithm of
Bar-Yehuda and Even \cite{Baryehuda81}. 
In
each step of the algorithm we maintain a feasible dual solution, and
$M$ will be the set of tight vertices.
\renewcommand{\labelenumi}{(\arabic{enumi})}
\renewcommand{\labelenumii}{(\arabic{enumi}-\arabic{enumii})}

\begin{enumerate}
\setcounter{enumi}{-1}
\item Set $y(uv):=0$ for each $uv\in E$ and $M=\{v\in V: c(v)=0\}$.
\item While $M$ is not a vertex cover do
\begin{enumerate}
\item Choose an arbitrary edge $uv\in E$ with $u,v\in V-M$.
\item Raise $y(uv)$ until $u$ or $v$ becomes tight.
\item Include the new tight endpoint(s) into $M$.
\end{enumerate}
\item Return $M$.
\end{enumerate}
It is straightforward that the algorithm returns a complementary pair
$(M,y)$. Using Lemma~\ref{lemma:vc-dualne}, we obtain the following.}{
As an immediate consequence, we get the following.}
\begin{theorem}
The Mafia Vertex Cover Game always has a pure Nash equilibrium.\proofbox
\end{theorem}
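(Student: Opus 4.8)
The plan is to reduce the statement entirely to the two preceding lemmas by exhibiting a complementary pair. By Lemma~\ref{lemma:vc-dualne}, any complementary pair $(M,y)$ induces a pure Nash equilibrium via the strategy profile that places $M$ in the Mafia, leaves $C=V\setminus M$ as civilians, and sets $r(u,v)=y(uv)$ for each $u\in M$ and $v\in N(u)$. Hence it suffices to show that \emph{every} instance admits a vertex cover $M$ together with a feasible solution $y$ of \eqref{prog:DVC} under which each vertex of $M$ is tight. So the whole theorem hinges on one existence claim, and once that is in hand the conclusion is immediate.

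To produce such a pair I would run the primal--dual algorithm of Bar-Yehuda and Even. Initialize $y\equiv 0$ and let $M$ be the set of zero-cost vertices, which are tight since their dual constraint reads $\sum_{uv\in E}y(uv)\le 0$ and is met with equality at $y\equiv 0$. Then repeatedly pick an edge $uv$ with $u,v\in V\setminus M$ and raise $y(uv)$ until the constraint of $u$ or of $v$ first holds with equality, adding the newly tight endpoint(s) to $M$. The two invariants to carry through the loop are that $y$ remains dual-feasible and that $M$ equals the set of tight vertices; both are preserved by construction, because increasing a single coordinate $y(uv)$ affects only the constraints at $u$ and $v$, and we stop at the earlier of the two saturation thresholds, while a vertex is placed in $M$ exactly when it becomes tight.

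It then remains to argue termination and correctness of the output. Each iteration adds at least one vertex to $M$, so the loop runs at most $|V|$ times; it exits only when $M$ meets every edge, i.e. when $M$ is a vertex cover. At that point $(M,y)$ satisfies all three requirements: $M$ is a vertex cover, $y$ is feasible, and every member of $M$ is tight — it is a complementary pair, and Lemma~\ref{lemma:vc-dualne} delivers the desired equilibrium. I do not anticipate a genuine obstacle, since the substantive content already lives in Lemma~\ref{lemma:vc-dualne}; the only point deserving care is the feasibility check during the dual-raising step, which is harmless precisely because a single $y(uv)$ touches only the constraints at its two endpoints and the increase halts at the first equality.
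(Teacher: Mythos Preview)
Your proposal is correct and follows essentially the same route as the paper: it invokes the Bar-Yehuda--Even primal--dual procedure to produce a complementary pair and then appeals to Lemma~\ref{lemma:vc-dualne} to obtain the Nash equilibrium. The details you spell out (initializing $M$ with the zero-cost vertices, the two invariants, and the termination bound) match the paper's description of the algorithm almost verbatim.
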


\iftoggle{full}{\smallskip}{}

\noindent\textbf{The Price of Anarchy. }%
For a strategy profile $\C S$ with $\alpha$ uncovered edges,  the sum ot the utilities
is $-c(M)-2\alpha T$. The Price of Anarchy compares this sum in a Nash
equilibrium at the worst case to the maximum value over all strategy
profiles, that corresponds to a minimum cost vertex cover.

Consider a strategy profile $\C S$ that encodes a Nash equilibrium.
First, observe that Mafia $M$ is a vertex cover\iftoggle{full}{. Indeed, if there
were an uncovered edge $uv\in E$, both $u$ and $v$ would receive the high
penalty $T$, and therefore they would have incentive to join
Mafia.}{ due to the high penalties on uncovered edges.}
We shall prove that the cost $c(M)$ is at most twice the cost of an
optimal vertex cover, consequently, the price of anarchy is at most 2.

\begin{lemma}\label{lemma:vc-noprotected}
Let the strategy profile $\C S=(M,C,r)$ be a Nash equilibrium.
Then there are no protected mafiosi.
\end{lemma}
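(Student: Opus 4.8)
The plan is to argue by contradiction: I assume the set $P$ of protected mafiosi is nonempty and exhibit a profitable unilateral deviation, contradicting the Nash property. Let $Q$ be the set of mafiosi $m$ with $r(m,p)>0$ for some $p\in P$, i.e.\ those who route positive ransom into a protected neighbour. Since every $p\in P$ satisfies $D(p)>c(p)\ge 0$ it is demanded by someone, and because civilians demand nothing these demanders are mafiosi, so $Q\neq\emptyset$. The first step is a local structural fact: no $m\in Q$ has a civilian neighbour. Indeed, a civilian always pays the full ransom (marginal return $1$), whereas a protected $p$ returns only the scaled fraction $c(p)/D(p)<1$; shifting an infinitesimal $\epsilon$ of ransom from $p$ to a civilian neighbour would change $F^+(m)$ by $\epsilon\bigl(1-\tfrac{c(p)(D(p)-r(m,p))}{D(p)^2}\bigr)>0$, so a best-responding $m$ cannot have such a neighbour. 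In particular every neighbour of an $m\in Q$ is itself a mafioso.

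With this in hand I split on whether the demanders are themselves protected. In the first case, some $m\in Q$ has $m\notin P$, so $D(m)\le c(m)$ and $F^-(m)=D(m)$. I claim $m$ strictly prefers to become a civilian. The deviation is safe: all neighbours of $m$ are mafiosi, so no incident edge becomes uncovered and $\operatorname{Pen}(m)=0$, while the demand $D(m)$ asked of $m$ is unaffected by $m$'s own status. Comparing the mafioso payoff $-c(m)+F^+(m)-D(m)$ with the civilian payoff $-D(m)$, the difference is exactly $F^+(m)-c(m)$. Since $m$ collects a strictly scaled fraction from its protected neighbour $p$ and at most face value from the others, $F^+(m)<\sum_{u\in N(m)}r(m,u)=c(m)$, so the deviation strictly improves $m$'s utility, contradicting the Nash property.

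The remaining case is $Q\subseteq P$, where every mafioso demanding from a protected vertex is itself protected. Here no single-agent deviation is obviously improving, so I switch to a global counting argument. All ransom received by $P$ originates from $Q$, hence $\sum_{p\in P}D(p)=\sum_{m\in Q}t_m$, where $t_m\le c(m)$ denotes the total ransom $m$ sends into $P$; and because $Q\subseteq P$ this yields $\sum_{p\in P}D(p)\le\sum_{m\in Q}c(m)\le\sum_{p\in P}c(p)$. But every $p\in P$ is protected, so $\sum_{p\in P}D(p)>\sum_{p\in P}c(p)$, a contradiction. Thus in either case $P=\emptyset$.

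I expect the main obstacle to be precisely this case split. The natural ``shift the ransom'' and ``defect to civilian'' deviations only bite when a demander is not already protected, so the self-protected sub-case $Q\subseteq P$ genuinely needs the aggregate budget inequality rather than a local move. A secondary point requiring care is the infinitesimal-deviation computation in the structural step, where one must verify that reducing $r(m,p)$ slightly leaves $p$ protected (so the scaled income formula still applies) and that the resulting first-order gain is strictly positive because $c(p)<D(p)$.
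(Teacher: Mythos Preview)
Your proof is correct and uses exactly the same three ingredients as the paper: the counting inequality $\sum_{p\in P}D(p)>\sum_{p\in P}c(p)$, the ransom-shift deviation toward a civilian neighbour, and the defection to civilian when all neighbours lie in $M$. The only difference is organizational---the paper invokes the counting argument first to exhibit an unprotected $m$ with $r(m,p)>0$ for some $p\in P$ and then case-splits on whether this $m$ has a civilian neighbour, whereas you establish the no-civilian-neighbour structural fact up front and defer the counting to the $Q\subseteq P$ sub-case.
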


\begin{proof}
For a contradiction, suppose $P$ is nonempty. First we show there
exists an edge $mp\in E$ such that $m\in M\setminus P$, $p\in P$
and $r(m,p)>0$.
Indeed, if there were no such edges, then $\sum_{p\in P} D(p) \le
\sum_{p\in P} c(p)$
as the ransoms demanded from protected mafiosi are all demanded by 
others $P$. However, by definition  $D(p)>c(p)$ for all $p\in P$, giving
$\sum_{p\in P} D(p) > \sum_{p\in P} c(p)$, a contradiction.

Consider the edge $mp\in E$ as above.
\iftoggle{full}{%
We claim that $m$ could choose a better strategy, and therefore $\C S$
cannot be a Nash equilibrium. If $m$ does not have any civilian
neighbors, that is, $N(m)\subseteq M$, then his utility  would
strictly increase if
decides to become a civilian. Indeed, his income now is $F^+(m)<c(m)$
and he has to pay $F^-(m)=D(m)\le c(m)$. As a civilian, his utility
were $-D(m)$.

Next, assume there exists a $v\in C$, $mv\in E$. Then $m$ may decrease
$r(m,p)$ to 0
and increase $r(m,v)$ by the same amount.
Again, this would be a better strategy for $m$, as $v$ pays the full
amount whereas $p$ payed only a reduced amount.}{%
If $N(m)\subseteq M$, then $M$ could increase his utility if becoming
a civilian, as $F^-(m)=D(m)$ and $F^+(m)<c(m)$, whereas he would
receive $-D(m)$ as a civilian.
If there is a $v\in C$, $mv\in E$, then $m$ could increase his utility
by decreasing $r(m,p)$ to 0
and increasing  $r(m,v)$ by the same amount.
}
\nottoggle{full}{\proofbox}{}\end{proof}

\begin{lemma}\label{lemma:vc-lowdemand}
Suppose the strategy profile $\C S=(M,C,r)$ is a Nash equilibrium and let $v\in C$.
Then $D(v) \le 2c(v)$.
\end{lemma}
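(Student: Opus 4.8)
The plan is to use a single deviation argument: show that if the demand on a civilian $v$ were too large, then $v$ could strictly improve by switching to the Mafia, contradicting the equilibrium assumption. The crucial structural observation is that the demand $D(v)=\sum_{m\in M}r(m,v)$ is determined entirely by the strategies of the other agents (the ransoms they direct at $v$), so it is unaffected by $v$'s own choice of strategy. Consequently, when $v$ considers a unilateral deviation, the quantity $D(v)$ stays fixed and only the way $v$ itself is taxed changes.

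First I would record the baseline. Since $\C S$ is a Nash equilibrium, $M$ is a vertex cover, so the civilian $v$ is not incident to any uncovered edge and $\operatorname{Pen}(v)=0$; hence $v$'s current utility is exactly $U_{\C S}(v)=-D(v)$. Next I would argue by contradiction, assuming $D(v)>2c(v)$, and consider the deviation in which $v$ joins the Mafia, keeping all other strategies fixed and choosing any admissible ransom vector summing to $c(v)$. In the new profile the demand on $v$ is still $D(v)$, and because $D(v)>2c(v)>c(v)$, the deviating $v$ is \emph{protected}; therefore his payment is scaled down to $F^-(v)=\min\{D(v),c(v)\}=c(v)$.

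The key computation is then the deviation utility: $v$ pays $c(v)$ to enter, pays $F^-(v)=c(v)$ in (reduced) ransom, and collects $F^+(v)\ge 0$, giving utility $-c(v)+F^+(v)-F^-(v)=F^+(v)-2c(v)\ge -2c(v)$. Under the assumption $D(v)>2c(v)$ we have $-2c(v)>-D(v)$, so this deviation yields utility strictly greater than the current $-D(v)$, contradicting the Nash property. This forces $D(v)\le 2c(v)$, as claimed.

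I do not expect a genuine obstacle, since the income term only needs the trivial bound $F^+(v)\ge 0$ rather than a careful optimization over ransom vectors. The one point requiring care is justifying that $v$ becomes protected upon deviating: this hinges precisely on the observation that $D(v)$ is invariant under $v$'s deviation (it counts only ransoms \emph{demanded from} $v$ by others), together with the standing fact from Lemma~\ref{lemma:vc-noprotected} that the baseline profile has no protected mafiosi, so the inequality $D(v)>c(v)$ genuinely triggers protection in the deviated profile.
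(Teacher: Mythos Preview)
Your proof is correct and follows exactly the same deviation argument as the paper: assume $D(v)>2c(v)$, let $v$ join the Mafia, observe he becomes protected so $F^-(v)=c(v)$, and conclude his new utility is at least $-2c(v)>-D(v)$. One small remark: the appeal to Lemma~\ref{lemma:vc-noprotected} at the end is unnecessary, since whether $v$ is protected after deviating depends only on the comparison $D(v)>c(v)$, which follows directly from $D(v)>2c(v)\ge c(v)$ and the invariance of $D(v)$ you already established; the protection status of the \emph{other} mafiosi plays no role here.
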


\begin{proof}
Suppose the contrary: let $D(v) > 2c(v)$ and thus
 $U_{\C S}(v)<-2c(v)$. If joining Mafia, $v$ receives the
protected status and thus gains utility at least $-2c(v)$ as $F^-(v)=c(v)$.
\nottoggle{full}{\proofbox}{}\end{proof}

\begin{theorem}\label{thm:vc-nash-apx}
The price of anarchy in the Mafia game is 2.
\end{theorem}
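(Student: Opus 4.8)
The plan is to reduce the statement to the purely combinatorial bound $c(M)\le 2\,\mathrm{OPT}$, where $\mathrm{OPT}$ is the cost of a minimum vertex cover. Recall that in any Nash equilibrium $\C S=(M,C,r)$ the Mafia $M$ is a vertex cover, so there are no uncovered edges and the sum of utilities equals $-c(M)$; the social optimum, attained by letting $M$ be a minimum vertex cover, equals $-\mathrm{OPT}$. Hence the price of anarchy is exactly $\max_{\C S} c(M)/\mathrm{OPT}$, and it suffices to show $c(M)\le 2\,\mathrm{OPT}$ for every equilibrium $\C S$, together with a matching instance witnessing tightness.

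For the upper bound I would build a feasible solution to the dual \eqref{prog:DVC} directly from the ransom vectors and appeal to weak duality. Extending $r$ by $r(u,v)=0$ whenever $u\in C$, set
\[ y(uv)=\tfrac12\bigl(r(u,v)+r(v,u)\bigr) \qquad \text{for every } uv\in E. \]
To check dual feasibility at a vertex $u$, note $\sum_{v}y(uv)=\tfrac12\bigl(\sum_{v}r(u,v)+\sum_{v}r(v,u)\bigr)$, where $\sum_{v}r(u,v)$ equals $c(u)$ if $u\in M$ and $0$ if $u\in C$, while $\sum_{v}r(v,u)=D(u)$. If $u\in M$, Lemma~\ref{lemma:vc-noprotected} (no protected mafiosi) gives $D(u)\le c(u)$, so the sum is $\tfrac12(c(u)+D(u))\le c(u)$; if $u\in C$, Lemma~\ref{lemma:vc-lowdemand} gives $D(u)\le 2c(u)$, so the sum is $\tfrac12 D(u)\le c(u)$. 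In both cases the constraint at $u$ holds, so $y$ is dual feasible.

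It then remains to evaluate the objective. Since each mafioso demands ransoms totaling exactly $c(m)$, summing $r(u,v)+r(v,u)$ over all edges recovers every directed ransom once, giving $\sum_{uv\in E}(r(u,v)+r(v,u))=\sum_{m\in M}c(m)=c(M)$, hence $\sum_{uv\in E}y(uv)=\tfrac12 c(M)$. Weak duality against the primal relaxation \eqref{prog:PVC}, whose optimum is at most the integral minimum $\mathrm{OPT}$, yields $\tfrac12 c(M)\le \mathrm{OPT}$, i.e. $c(M)\le 2\,\mathrm{OPT}$, so the price of anarchy is at most $2$. For tightness I would exhibit a perfect matching on $2n$ vertices with unit costs: setting $y(uv)=1$ on every edge makes all vertices tight, so $(V,y)$ is a complementary pair and, by Lemma~\ref{lemma:vc-dualne}, the all-Mafia profile is a Nash equilibrium with $c(M)=2n=2\,\mathrm{OPT}$.

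The computations here are routine; the one place that requires care is the case analysis establishing dual feasibility, where the whole argument hinges on having the correct per-vertex demand bounds for the two agent types — $D(u)\le c(u)$ for mafiosi and $D(u)\le 2c(u)$ for civilians — so that the symmetrized ransom $\tfrac12(r(u,v)+r(v,u))$ stays below $c(u)$. The conceptual heart of the proof is the observation that symmetrizing the directed ransoms produces a feasible dual vertex-cover solution whose value is exactly half of $c(M)$, which is precisely the mechanism by which LP duality delivers the factor-$2$ guarantee.
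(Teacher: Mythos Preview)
Your proof is correct and essentially identical to the paper's: both symmetrize the ransoms to obtain the dual solution $\tfrac12\bigl(r(u,v)+r(v,u)\bigr)$, verify feasibility using Lemmas~\ref{lemma:vc-noprotected} and~\ref{lemma:vc-lowdemand}, and observe that the dual objective equals $\tfrac12\,c(M)$. Your explicit tightness example (the perfect matching with the all-Mafia profile via Lemma~\ref{lemma:vc-dualne}) is a welcome addition that the paper's own proof omits.
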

\iftoggle{full}{
\begin{proof}
Let $\C S=(M,C,r)$ be a strategy profile in a Nash equilibrium.
Using the convention $r(u,v)=0$ if $u\in C$, let us define $y(uv)=r(u,v)+r(v,u)$
for every edge $uv\in E$.
We show that  $\sum_{u\in V}y(uv)\le
2c(v)$ for every $v\in V$.
Indeed, if $v\in C$, then  $\sum_{u\in V} y(uv) = \sum_{u\in M} r(u,v)  = D(v)
\le 2c(v)$ by Lemma~\ref{lemma:vc-lowdemand}.
If $v\in M$, then $\sum_{u\in V}y(uv)=\sum_{u\in N(v)}r(v,u)+D(v)\le 2c(v)$ by
Lemma~\ref{lemma:vc-noprotected}.
Therefore $\frac12 y$ is a feasible solution to (\ref{prog:DVC})
and
$$
\sum_{uv\in E} \frac12 y(uv) = \frac{1}{2}\sum_{m\in M} \sum_{v\in V} r(m,v)
= \frac12 \sum_{m\in M} c(m).
$$
This verifies that the objective value for $\frac12 y$ is the half of
the cost of the primal feasible vertex cover $M$, proving that $M$ is
a 2-approximate vertex cover.
\end{proof}
}{%
\begin{proof}{\em (Sketch.)} For a strategy profile at Nash
  equilibrium, define $y(u,v)=r(u,v)+r(v,u)$. Using the Lemmas above,
  it follows that $\frac12 y$  is a feasible solution to
  (\ref{prog:DVC}), and moreover the cost of $\frac 12y$ is the half
  of the cost of the primal feasible vertex cover $M$.\proofbox
\end{proof}
}

\subsection{Set cover and hitting set}\label{sec:hs}

In this section, we generalize our approach to the hitting set
problem. Given a hypergraph ${\cal G}=(V,{\cal E})$ and a cost function
$c:V\rightarrow \R_+$, we want to find a minimum cost $M\subseteq V$
intersecting every hyperedge. Let $d=\max\{|S|: S\in {\cal E}\}$.

\iftoggle{full}{
In the {\em set cover problem}, we have a ground set $U$ and a collection of
subsets $\cal S$ of $U$. For a cost function $c:{\cal
  S}\rightarrow \R_+$ we want to find minimum cost collection of subsets whose
union is $U$. This is equivalent to the hitting set problem, where the
ground set is $\cal S$, and to each $u\in U$, there is a corresponding hyperedge
that is the collection of subsets containing $u$.

For simplicity, we define the hitting set game on a $d$-uniform
hypergraph. This can be done without loss of generality. To verify
this, take an arbitrary instance  ${\cal G}=(V,{\cal
  E})$, and  let $T>d\sum_{v\in V}c(v)$. Extend $V$ by $d-1$
new vertices of cost $T$, and for every $S\in {\cal E}$, extend $S$ by any
$d-|S|$ new elements. If there is a $d$-approximate solution
to the modified instance, it cannot contain any of the new
elements. Hence finding a $d$-approximate solution is equivalent
in the original and in the modified instance.
}{
The {\em set cover problem} is well-known to be equivalent to the
hitting set problem. Also, without loss of generality we may define
the hitting set game on a $d$-uniform hypergraph. The general case can
be easily reduced to it by adding at most $d-1$ new dummy elements of
high cost (see Appendix A).
}{}

\iftoggle{full}{\smallskip}{}

\noindent\textbf{Game definition. }We define the {\em Mafia Hitting Set Game} on a $d$-uniform
hypergraph $\C G=(V,\C E)$. The set of agents is $V$, with $v\in V$
having a {\em budget} $c(v)$.  We shall call the
hyperedges {\em clubs}. For an agent $v\in V$, let $\C N(v)\subseteq \C E$ denote the set
of clubs containing $v$.
The agents again choose from the strategy of being a civilian
or being a mafioso, denoting their sets by $C$ and $M$, respectively.
The strategies of the mafioso $m$ incorporates the ransoms $r(m,S)$
for the clubs $S$ containing $m$, with $\sum_{S\in \C N(v)} r(m,S)
= c(m)$.

We define the payoffs for the strategy profile $\C S=(M,C,r)$
similarly to the vertex cover case. For a civilian $v\in C$,
$\operatorname{Pen}(v)=T$ for a large constant $T$ if $v$ participates
in a club containing no mafiosi, and 0 otherwise.

In each club $S$, the ransom $r(m,S)$ of a mafioso $m\in S\cap M$
has to be payed by all other members at equal rate, that is,
everyone pays  $\frac{r(m,S)}{(d-1)}$ to $m$. The
demand from an agent is the total amount he has to pay in
all clubs he is a member of, that is,
\iftoggle{full}{\[
D(v)=\frac1{d-1}\sum_{S\in \C N(v)} \sum_{m\in M\cap S}r(m,S).
\]}{%
\[
D(v)=\frac1{d-1}\sum_{S\in \C N(v)} \sum_{m\in M\cap S}r(m,S).
\]
}
The utility of a civilian  $v\in C$ is defined as
$U_{\C S}(v) =  - D(v) - \operatorname{Pen}(v)$.

A mafioso $v$ receives the protected status if $D(v)>c(v)$.
The set of
protected mafiosi is denoted by $P$, and they pay proportionally reduced ransoms.
Let $F^-(v)=\min\{D(v),c(v)\}$ be the total amount $v$ pays.
The income is defined by
\iftoggle{full}{
$$
F^+(v) = \sum_{S\in \C N(v)}\frac{r(v,S)}{d-1}\left(| S\setminus (P\cup\{v\})|+
\sum_{u\in (S\cap P)\setminus\{v\}}\frac{c(u)}{D(u)}\right).
$$}{%
\[
F^+(v) = \sum_{S\in \C N(v)}\frac{r(v,S)}{d-1}\left(| S\setminus (P\cup\{v\})|+
\sum_{u\in (S\cap P)\setminus\{v\}}\frac{c(u)}{D(u)}\right).\]
}
The  utility of a mafioso $v\in M$ is then
$U_{\C S}(v) =-c(v)+ F^+(v) - F^-(v)$.

\nottoggle{full}{%
Analogously to vertex cover, we show the following.
\begin{theorem}\label{thm:hitting-main-icalp}
There exists pure Nash equilibria in the  Mafia Hitting Set Game, and
the Price of Anarchy is at most $d$. The output of the
Bar-Yehuda--Even algorithm always gives a Nash equilbrium.\proofbox
\end{theorem}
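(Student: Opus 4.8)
The plan is to mirror the vertex cover development of Section~\ref{sec:vc} step by step, replacing edges by clubs and the factor $2$ by $d$. First I would write down the standard LP relaxation of hitting set, $\min \sum_v c(v)x(v)$ subject to $\sum_{v\in S}x(v)\ge 1$ for every $S\in\C E$, together with its dual having one variable $y(S)$ per club and constraints $\sum_{S\in\C N(v)}y(S)\le c(v)$ for every $v$. Calling $(M,y)$ a \emph{complementary pair} when $M$ is a hitting set, $y$ is dual feasible, and every $v\in M$ is tight, the $d$-approximation guarantee then follows from the counting $\sum_{v\in M}c(v)=\sum_{v\in M}\sum_{S\in\C N(v)}y(S)=\sum_{S}y(S)\,|M\cap S|\le d\sum_S y(S)$, since $|M\cap S|\le|S|\le d$ and $\sum_S y(S)$ is a dual (hence lower) bound on the optimum.

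Next I would establish existence, as in Lemma~\ref{lemma:vc-dualne}, by showing that every complementary pair yields a Nash equilibrium: let $M$ be the Mafia and set $r(m,S)=y(S)$ for each $m\in M$ and $S\in\C N(m)$, which is a legal strategy since tightness gives $\sum_{S\in\C N(m)}r(m,S)=c(m)$. A short computation shows $D(v)\le\sum_{S\in\C N(v)}y(S)\le c(v)$ for every agent (using $|M\cap S|\le d-1$ for a civilian and $|(M\cap S)\setminus\{v\}|\le d-1$ for a mafioso), so there are no protected mafiosi and every agent already attains the best possible payoff $-D(v)$. Running the Bar-Yehuda--Even algorithm on the $d$-uniform hypergraph --- repeatedly pick an uncovered club, raise its $y(S)$ until a member becomes tight, and add the tight members to $M$ --- produces a complementary pair, and hence the desired equilibrium.

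For the price of anarchy I would argue, exactly as for vertex cover, that at any equilibrium $M$ is a hitting set (otherwise an uncovered club inflicts the penalty $T$), and then prove two lemmas. The no-protected-mafiosi lemma is obtained by the same averaging argument: if $P\neq\emptyset$ then $\sum_{p\in P}D(p)>\sum_{p\in P}c(p)$ forces some club $S$ to contain a protected $p$ and an unprotected mafioso $m$ with $r(m,S)>0$; such an $m$ can strictly improve, either by becoming a civilian (safe when every club of $m$ contains a second mafioso, using $F^+(m)<c(m)$) or, otherwise, by shifting the ransom $r(m,S)$ to a club in which $m$ is the sole mafioso and hence collects at the full rate. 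The civilian-demand lemma would assert $D(v)\le\frac{d}{d-1}c(v)$ for every $v\in C$. Granting these, I set $y(S)=\sum_{m\in M\cap S}r(m,S)$; splitting off the self-term gives $\sum_{S\in\C N(v)}y(S)=(d-1)D(v)\le d\,c(v)$ for civilians and $\sum_{S\in\C N(v)}y(S)=c(v)+(d-1)D(v)\le d\,c(v)$ for mafiosi, so $\frac1d y$ is dual feasible, and its objective equals $\frac1d\sum_{m\in M}c(m)=c(M)/d$, proving $c(M)\le d\cdot\mathrm{OPT}$.

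The main obstacle is the civilian-demand lemma with its sharp factor $\frac{d}{d-1}$ (which specializes to the $2c(v)$ bound of Lemma~\ref{lemma:vc-lowdemand} when $d=2$). The natural deviation for an over-demanded civilian $v$ is to join the Mafia: since the incoming ransoms of the other agents are unchanged, $v$ still faces demand $D(v)>\frac{d}{d-1}c(v)>c(v)$, becomes protected, and therefore caps its payment at $F^-(v)=c(v)$. The delicate point is to lower-bound the income $F^+(v)$ that $v$ can recover from its own ransoms: here one must use that the equilibrium has no protected mafiosi, so $v$ can place its budget on clubs whose other members pay at (nearly) full rate, and then show the resulting gain beats the slack $2c(v)-D(v)$. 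Making this income estimate precise, so that the deviation strictly improves exactly when $D(v)>\frac{d}{d-1}c(v)$, is where the real work lies; the remaining steps are a routine translation of the vertex cover arguments.
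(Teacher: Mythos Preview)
Your outline matches the paper's development exactly: the LP relaxation, complementary pairs, the Bar-Yehuda--Even existence argument, the no-protected-mafiosi lemma, the civilian-demand bound $D(v)\le\frac{d}{d-1}c(v)$, and the final dual-feasibility computation with $y(S)=\sum_{m\in M\cap S}r(m,S)$ are all identical to what the paper does. You have correctly isolated the one nontrivial step.

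The missing idea in the civilian-demand lemma is a concrete ransom choice together with an overcharge bound. When the over-demanded civilian $v$ joins the Mafia, the paper has $v$ ``steal'' the profile of the existing mafiosi, setting
\[
r'(v,S)=\frac{c(v)}{(d-1)D(v)}\sum_{m\in M\cap S}r(m,S),
\]
i.e.\ proportional to what $v$ was being charged through club $S$. The subtlety your sketch glosses over is that the other mafiosi need \emph{not} remain unprotected after $v$ enters; rather, their overcharge is controlled. Using $\frac{c(v)}{D(v)}<\frac{d-1}{d}$, $\sum_{S\ni t}r(t,S)\le c(t)$, and $\sum_{S\ni t}\sum_{t'\in(S\cap M)\setminus\{t\}}r(t',S)\le(d-1)D(t)\le(d-1)c(t)$, one gets that $v$'s new demand on any $t\in M$ is below $\frac{c(t)}{d-1}$, hence $D'(t)\le D(t)+\frac{c(t)}{d-1}\le\frac{d}{d-1}c(t)$. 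Thus every neighbor still pays at least a $\frac{d-1}{d}$ fraction of what $v$ asks, giving $F^+_{\C S'}(v)\ge\frac{d-1}{d}c(v)\ge\frac{d-2}{d-1}c(v)$, and this beats $2c(v)-D(v)$ precisely when $D(v)>\frac{d}{d-1}c(v)$. Your phrase ``pay at (nearly) full rate'' is the right intuition, but the sharp $\frac{d}{d-1}$ constant comes from this specific proportional ransom choice and the resulting overcharge estimate.
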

The proof is deferred to Appendix A. It is similar to the case of 
vertex cover, with the proof of the
analogue of
Lemma~\ref{lemma:vc-lowdemand} being the difficult part.}{
\smallskip

\noindent\textbf{The existence of pure Nash equilibria. }
The standard LP-relaxation extends the formulations (\ref{prog:PVC})
and (\ref{prog:DVC}).
\begin{multicols}{2}{%
\noindent%
\begin{align}%
\min& \sum_{v\in V} c(v) x(v) \tag{P-HS}\label{prog:PHS}\\%
\sum_{u\in S}x(u)&\ge 1 \quad \forall S\in \C E\notag\\
x&\ge0\notag%
\end{align}
}{%
\begin{align}%
\max& \sum_{S\in\C E} y(S) \tag{D-HS}\label{prog:DHS}\\
\sum_{S\in \C N(u)} y(S) &\le c(u) \quad \forall u\in V\notag\\
y&\ge0\notag%
\end{align}%
}%
\end{multicols}

Again, for a feasible dual solution $y$, $v\in V$ is called tight if
the corresponding inequality in (\ref{prog:DHS}) holds with equality.
A pair $(M,y)$ of a hitting set $M$ and a feasible dual $y$ is called
a {\em complementary} pair if the dual inequality corresponding to any
$v\in M$ is tight.
The following simple claim generalizes Lemma~\ref{claim:vc-approx}.
\begin{lemma}\label{claim:hs-approx}
If $(M,y)$ is a complementary pair, then $M$ is a $d$-approximate
solution to the hitting set problem.\proofbox
\end{lemma}
The algorithm of Bar-Yehuda and Even \cite{Baryehuda81}, outlined in
Section~\ref{sec:vc} naturally extends to the hitting set
problem, and delivers a complementary pair.

\begin{lemma}\label{lemma:hs-dualne}
Let us define strategies in the Mafia Hitting Set Game based on a
complementary pair $(M,y)$ as follows. Let agents in $M$ be the Mafia and $V\setminus M$ be
the civilians. For each $v\in M$, define $r(v,S)=y(S)$ for every $S\in\C E$
containing $v$.
Then the strategy profile $\C S=(M,C,r)$ is a Nash equilibrium.
\end{lemma}

\begin{proof}
For each $v\in V$,  $D(v)\le \frac{1}{d-1}\sum_{S\in \C
  N(v)}y(S)|(S\cap M)\setminus\{v\}|\le c(v)$ and therefore there are no protected mafiosi.
The proof that nobody has an incentive to change his strategy is the
same as for Lemma~\ref{lemma:vc-dualne}.
\nottoggle{full}{\proofbox}{}\end{proof}

As the algorithm of Bar-Yehuda and Even \cite{Baryehuda81} provides a
complementary pair, this immediately yields the following.
\begin{theorem}
The Mafia Hitting Set Game always has a pure Nash equilibrium.
\end{theorem}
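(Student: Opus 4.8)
The plan is to exhibit an explicit complementary pair $(M,y)$ and then invoke Lemma~\ref{lemma:hs-dualne}, which already converts any such pair into a pure Nash equilibrium. Thus the entire content of the proof is to \emph{produce} a complementary pair in the $d$-uniform hitting set setting; the passage from a complementary pair to an equilibrium requires no additional work, as that step is precisely the statement of Lemma~\ref{lemma:hs-dualne}.

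To build the pair I would run the Bar-Yehuda--Even primal-dual scheme, exactly as sketched for vertex cover in Section~\ref{sec:vc}, now on the hypergraph $\C G=(V,\C E)$. Initialize $y\equiv 0$ and $M=\{v\in V:\ c(v)=0\}$. While $M$ fails to be a hitting set, select some club $S\in\C E$ with $S\cap M=\emptyset$, increase $y(S)$ continuously until at least one vertex $u\in S$ becomes tight, and add every newly tight vertex of $S$ to $M$. Since raising $y(S)$ only affects the dual constraints of the vertices in $S$, and we stop at the first tightness, the solution $y$ remains feasible for (\ref{prog:DHS}) as a loop invariant.

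The verification then splits into four routine checks: (i) termination, since each iteration adds at least one vertex to the finite set $M$, so the loop runs at most $|V|$ times; (ii) feasibility of the final $y$, maintained by the stopping rule; (iii) that $M$ is a hitting set, which is the loop's exit condition; and (iv) that every vertex of $M$ is tight. The crucial observation is (iv): a vertex is added to $M$ exactly when its dual constraint $\sum_{S\in\C N(u)}y(S)=c(u)$ first holds, and it \emph{stays} tight because the algorithm only ever raises $y(S)$ on clubs $S$ disjoint from $M$; once $u$ enters $M$, every club in $\C N(u)$ meets $M$ and is therefore never selected again, so none of the terms in $u$'s constraint are touched thereafter. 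Vertices with $c(v)=0$ are tight vacuously. Together, (ii), (iii) and (iv) say exactly that $(M,y)$ is a complementary pair.

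I do not expect a genuine obstacle here; the result is an immediate corollary of the machinery already in place. The only point demanding even a moment's care is (iv) — confirming that the $d$-uniform structure and the equal-rate ransom convention do not disturb the tightness bookkeeping — but the dual updates of the Bar-Yehuda--Even scheme have the same form as in the vertex cover case, so the argument transfers verbatim. Applying Lemma~\ref{lemma:hs-dualne} to the complementary pair $(M,y)$ then produces the desired pure Nash equilibrium, completing the proof.
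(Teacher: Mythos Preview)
Your proposal is correct and matches the paper's own argument essentially verbatim: the paper simply observes that the Bar-Yehuda--Even algorithm extends to hypergraphs and delivers a complementary pair, and then invokes Lemma~\ref{lemma:hs-dualne}. Your four checks (i)--(iv) spell out what the paper leaves implicit, but the content is identical.
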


\iftoggle{full}{\smallskip}{}

\noindent\textbf{The Price of Anarchy. }
\begin{lemma}\label{lemma:hs-noprotected}
Let the strategy profile $\C S=(M,C,r)$ be a Nash equilibrium.
Then there are no protected mafiosi.
\end{lemma}

\begin{proof}
The proof follows the same lines as for
Lemma~\ref{lemma:vc-noprotected}. For a contradiction, assume $P\neq \emptyset$.
First, it is easy to show that there exists an unprotected $m\in M\setminus P$ and $S\in\C E$, $S\cap P\neq\emptyset$,
such that $r(m,S)>0$ by comparing the total in-demand and
out-demand of protected mafiosi.
For such $m$, if there exists no set $S'\in\C E$ with
 $S'\cap M = \{m\}$, then he could increase his utility by leaving the
 Mafia.
Otherwise, he could increase his utility by decreasing $r(m,S)$ and
increasing $r(m,S')$.
\nottoggle{full}{\proofbox}{}\end{proof}

The following lemma is the analogue of Lemma~\ref{lemma:vc-lowdemand},
yet the proof is more complicated.
\begin{lemma}\label{lemma:hs-lowdemand}
Let the strategy profile $\C S=(M,C,r)$ be a Nash equilibrium and let $v\in C$.
Then $D(v)\le \frac{d}{d-1} c(v)$.
\end{lemma}

\begin{proof}
Suppose the contrary, let there be a $v\in C$ such that $D(v)>\frac{d}{d-1}c(v)$.
His current utility is $U_{\C S}(v)=-D(v)$.

We show that $v$ could join Mafia and set ransoms that provide him
a strictly larger utility. If $F_{\C S'}^+(v)$ is the income for such a
strategy profile $\C S'$, then $U_{\C S'}(v)=F_{\C S'}^+(v)-2c(v)$,
as he would obtain the protected status. To get $U_{\C S'}(v)>U_{\C
  S}(v)$, we need to ensure $F_{\C S'}^+(v)>2c(v)-D(v)$.
As $D(v)>\frac{d}{d-1} c(v)$ is assumed, it suffices to give an $\C S'$ with
\begin{equation}\label{eq:income-bound}
F_{\C S'}^+(v)\ge \frac{d-2}{d-1}c(v).
\end{equation}
We define the ransoms $r'(v,S)$ by ``stealing'' the strategies of the other
mafiosi.
That is, for each club $S\in \C N(v)$, $\frac{1}{d-1}\sum_{m\in M\cap
  S}r(m,S)$ is the total ransom $v$ has to pay to the members of this
club. We define $r'(v,S)$ proportionally to this amount:
\[
r'(v,S):=\frac{c(v)}{(d-1)D(v)}\sum_{m\in M\cap   S}r(m,S).
\]
By Lemma~\ref{lemma:hs-noprotected}, we know that there are
no protected mafiosi in the Nash equilibrium $\C S$.
We show that after $v$ enters Mafia, even if some of the old mafiosi
become protected, they are only slightly overcharged.
More precisely, we shall show that
\begin{equation}\label{eq:overcharge}
D'(t)\le \frac{d}{d-1} c(t)\ \ \forall t\in M.
\end{equation}
From this bound, (\ref{eq:income-bound}) immediately follows.
Indeed, everybody will pay at least $\frac{d}{d-1}$ fraction of
  the demands, and therefore $F_{\C S'}^+(v)\ge \frac{d-1}{d}c(v)\ge \frac{d-2}{d-1}c(v)$.

It is left to prove (\ref{eq:overcharge}). The demand of $v$ from some
$t\in M$ can be bounded as follows:
\begin{align*}
\frac{1}{d-1}
\sum_{S\in \C N(v)\cap \C N(t)} r'(v,S) = \frac{c(v)}{(d-1)^2D(v)}  \sum_{S\in
  \C N(v)\cap\C N(t)}
\left( r(t,S) + \sum_{t'\in (S\cap M)\setminus\{t\}} r(t',S) \right)\\
  < \frac{1}{d(d-1)} (c(t)+(d-1)D(t)) \le \frac{1}{d(d-1)} d\cdot c(t) = \frac{c(t)}{d-1}.
\end{align*}
Here we used that $D(t)\le c(t)$ as $t$ was not
protected in $\C S$. Using this fact once more, we get
\[
D'(t) \le D(t) + \frac{c(t)}{d-1} \le \frac{d}{d-1} c(t).
\]%
\nottoggle{full}{\proofbox}{}\end{proof}

\begin{theorem}
The price of anarchy for the Mafia Hitting Set Game is $d$.
\end{theorem}

\begin{proof}
Let $\C S=(M,C,r)$ be a strategy profile in a Nash equilibrium.
Then $M$ is a hitting set, as if there was an uncovered club, all members
would be unhappy due to the term $\operatorname{Pen}(v)$. We show that the
cost of $M$ is within a factor $d$ from the optimum.
Let us set $y(S)=\sum_{v\in M\cap S} r(v,S)$ for each $S\in\C E$.
Lemmas~\ref{lemma:hs-noprotected} and \ref{lemma:hs-lowdemand} easily
imply
$\sum_{S\in\C S: v\in S} y(S)\le d\cdot c(v)$ for every $v\in V$, and
thus $\frac 1d y$ is a feasible dual solution to (\ref{prog:DHS}).
Then
$$
\sum_{S\in\C S} \frac{1}{d} y(S) = \sum_{s\in \C S} \frac{1}{d} \sum_{m\in M\cap S}r(m,S)
=\sum_{m\in M} \frac{1}{d} \sum_{S\in \C N(m)} r(m,S) = \frac 1d \sum_{m\in M} c(m),
$$
showing that $M$ is a $d$-approximate solution to (\ref{prog:PHS}).
\end{proof}
}
\subsection{Submodular hitting set}\label{sec:submod}
In the submodular hitting set problem, we are given a hypergraph
$G=(V,\C E)$ with a submodular set function $C:2^V\rightarrow
\R_+$, that is, $C(\emptyset)=0$, and
\iftoggle{full}{
\[
C(X)+C(Y)\ge C(X\cap Y)+C(X\cup Y)\ \ \ \forall X,Y\subseteq V.
\]
}{%
$C(X)+C(Y)\ge C(X\cap Y)+C(X\cup Y)$ for any sets $X,Y\subseteq V$.
}
We shall assume also that $C$ is monotone, that is, $C(X)\le C(Y)$ if
$X\subseteq Y$. Our aim is to find a hitting set $M$ minimizing
$C(M)$.

Koufogiannakis and Young \cite{Koufogiannakis09greedy}, and Iwata and
Nagano \cite{Iwata09} obtained $d$-approximation algorithms for this
problem, where $d$ is the maximum size of a hyperedge.
\iftoggle{full}{%
We shall present the primal-dual algorithm in \cite{Iwata09},  a
natural extension of the Bar-Yehuda--Even algorithm.}{
The primal-dual algorithm in  \cite{Iwata09} is  a
natural extension of the Bar-Yehuda--Even algorithm (see Appendix B).
}

\iftoggle{full}{
For a submodular function $C$, it is natural to define the following
two polyhedra. The
{\em submodular polyhedron} is
\[
P(C)=\{z\in \R^V: z\ge 0, z(Z)\le C(Z)\ \ \forall Z\subseteq V\},
\]
and the {\em submodular base polyhedron} is}{
A notion needed to define our game is the {\em submodular base polyhedron}:
}
\[
B(C)=\{z\in \R^V: z\ge 0, z(Z)\le C(Z)\ \ \forall Z\subsetneq V, z(V)=C(V)\}.
\]
\iftoggle{full}{Given a vector $z\in P(C)$, the set $Z$ is {\em tight} with respect to
$z$ if $z(Z)=C(Z)$. An elementary consequence of submodularity is that
for every $z\in P(C)$, there exists a unique maximal tight set.
Note that $B(C)\subseteq P(C)$ and $z\in P(C)$ is in $B(C)$ if and
only if $V$ is tight.

In the LP relaxation, we assign a primal variable $\xi(Z)$ to every
subset $Z\subseteq V$. In an integer solution, $\xi(Z)= 1$ if $Z$ is the
chosen hitting set and 0 otherwise.
\begin{multicols}{2}{%
\noindent%
\begin{align}
\min& \sum_{Z\subseteq V} C(Z) \xi(Z)\notag\\
\sum_{Z\in \C N(u)}\xi(Z)&=x(u)&\forall u\in V\notag\\
\sum_{u\in S}x(u)&\ge 1&\forall S\in \C E\tag{P-SHS}\label{prog:PSHS}\\
\xi&\ge0\notag
\end{align}%
}{%
\begin{align}
\max& \sum_{S\in\C E} y(S)\notag\\
\sum_{S\in \C N(u)} y(S) &= z(u)&\forall u\in V\notag\\
z&\in P(C)\tag{D-SHS}\label{prog:DSHS}\\
y&\ge0\notag
\end{align}%
}%
\end{multicols}

Note that in the dual program, $y$ uniquely defines $z$. Therefore we
will say that $y$ is a feasible dual solution if the corresponding $z$
is in $P(C)$. For the special case of the (linear)
hitting set problem, where $C(Z)=\sum_{v\in Z}c(z)$ for some
$c:V\rightarrow\R_+$, this is equivalent to $y$ satisfying (\ref{prog:DHS}).

Accordingly, we say that a set $Z$ is tight for a
feasible dual $y$ if $z(Z)=P(C)$.
For a hitting set $M$ and a feasible dual solution $y$, we say that
$(M,y)$ is a {\em complementary pair} if $M$ is tight for $y$. The
following is the
generalization of Lemmas~\ref{claim:vc-approx} and \ref{claim:hs-approx}.
\begin{lemma}\label{claim:shs-approx}
If $(M,y)$ is a complementary pair, then $M$ is a $d$-approximate
solution of the hitting set problem.
\end{lemma}
\begin{proof}
The primal objective is at most $d$ times the dual objective, as
\[
C(M)=\sum_{v\in M}z(m)=\sum_{v\in M}\sum_{S\in \C N(v)}y(S)\le d\sum_{S}y(S).
\]
The inequality follows as each $S$ is counted $|S|\le d$ times.
\nottoggle{full}{\proofbox}{}\end{proof}

The algorithm by Iwata and Nagano \cite{Iwata09} is as follows.
\begin{enumerate}
\setcounter{enumi}{-1}
\item Set $y(S):=0$ for each $S\in \C E$, $z(v):=0$ for $v\in V$,  and let $M$ be the unique maximal
  set with $C(M)=0$.
\item While $M$ is not a hitting set do
\begin{enumerate}
\item Choose an arbitrary hyperedge $S\in \cal E$, $S\cap M=\emptyset$.
\item\label{step:epsilon} Compute $\varepsilon=\max\{\lambda:z+\lambda\chi_Z\in P(C)\}.$
\item Increase $y(S)$ and every $z(v)$ for $v\in Z$ by $\varepsilon$.
\item Replace $M$ by the new unique maximal tight set.
\end{enumerate}
\item Return $M$.
\end{enumerate}

In step (1-2), $\chi_Z$ is the characteristic function of $Z$. This
step can be performed in the same running time as a submodular
function minimization (see \cite{fleischer03}). Note also that $M$ will always intersect $S$ in
step (1-4) and therefore will be strictly extended. It is immediate that it
returns a complementary pair $(M,y)$ and thus
Lemma~\ref{claim:shs-approx} proves $d$-approximation.

\iftoggle{full}{\smallskip}{}
}
\noindent\textbf{Game definition. }%
\iftoggle{full}{The vector $z$ in (\ref{prog:DSHS}) plays an analogous role to the
budgets $c$ in the (linear) Mafia Hitting Set Game.
We introduce a new agent, the {\em Godfather} to set the budgets of
the agents.

}{}
The {\em Submodular Mafia Hitting Set Game} is defined on
 a hypergraph $\C G=(V,\C E)$ and a monotone
 submodular set function $C:2^V\rightarrow \R_+$. There are $|V|+1$
agents, one for each vertex and a special agent $g$, called the {\em
  Godfather}.

The strategy of the Godfather
is to return a budget vector $\tilde c\in B(C)$.
The basic strategy of an agent $v\in V$ is to
decide being a civilian or being a mafioso.
The strategy of a mafioso $m$ further incorporates
normalized ransoms $r_0(m,S)\ge 0$ for clubs $S\in \C N(m)$ with
$\sum_{S\in\C N(m)}r_0(m,S)=1$, that is, $r_0(m,S)$ expresses the
fraction of the budget of $m$ he is willing to charge on $S$.

The sets of civilians and mafiosi will again be denoted
by $C$ and $M$, respectively. Hence a strategy profile is given as $\C
S=(M,C,\tilde c,r_0)$.
The actual ransoms will be $r(m,S)=r_0(m,S)\cdot\tilde c(m)$.

The utility of the Godfather is the total budget of the Mafia: $U_{\C S}(g)=C(M)$.
The utility of the vertex agents is defined the same way as for the
linear Mafia Hitting Set Game in  Section~\ref{sec:hs}, with replacing $c(v)$ by $\tilde c(v)$ everywhere.

For linear cost functions, we have $C(Z)=\sum_{v\in Z}c(z)$. Then
the only vector in $B(C)$ is $c$, hence the Godfather has
only one strategy to choose. Therefore we obtain the same game as
described in Section~\ref{sec:hs}.
\nottoggle{full}{
Our result can be summarized as follows.
\begin{theorem}\label{thm:submod-main-icalp}
There exists pure Nash equilibria in the  Submodular Mafia Hitting Set Game, and
the Price of Anarchy is at most $d$. The output of the primal-dual
algorithm by Iwata and Nagano \cite{Iwata09} always gives a Nash equilbrium.
\end{theorem}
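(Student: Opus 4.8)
The plan is to mirror the analysis of the linear Mafia Hitting Set Game from Section~\ref{sec:hs}, treating the budget vector $\tilde c$ produced by the Godfather exactly as the fixed cost function $c$ was treated there, and to handle the one genuinely new ingredient — the Godfather — using two standard facts about the base polyhedron $B(C)$: that for every $z\in P(C)$ there is a $\tilde c\in B(C)$ with $\tilde c\ge z$ agreeing with $z$ on the maximal tight set of $z$, and that $\max_{\tilde c\in B(C)}\tilde c(M)=C(M)$ for every $M\subseteq V$. I would first establish existence by exhibiting a Nash equilibrium from a complementary pair, and then bound the price of anarchy by showing that every Nash equilibrium yields a hitting set $M$ whose cost $C(M)$ is at most $d$ times the optimum.

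For existence, I would start from the complementary pair $(M,y)$ returned by the Iwata--Nagano algorithm, so that $z(u)=\sum_{S\in\C N(u)}y(S)\in P(C)$ and $M$ is the maximal tight set, i.e. $z(M)=C(M)$. Let the Godfather play a vector $\tilde c\in B(C)$ obtained by greedily raising the coordinates of $z$ outside $M$ to saturation; then $\tilde c\ge z$ and $\tilde c(v)=z(v)$ for $v\in M$. Put $M$ in the Mafia, $V\setminus M$ among the civilians, and for $m\in M$ set $r_0(m,S)=y(S)/\tilde c(m)$, which is well defined and sums to $1$ since $\sum_S y(S)=z(m)=\tilde c(m)$ on $M$, so that $r(m,S)=y(S)$. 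The verification that no vertex agent deviates is identical to Lemma~\ref{lemma:hs-dualne}, using $D(v)\le\sum_{S\in\C N(v)}y(S)=z(v)\le\tilde c(v)$ to rule out protected mafiosi. The Godfather cannot improve either: with the vertex strategies (hence $M$) fixed, his utility, the total budget $\tilde c(M)$ of the Mafia, is maximized over $B(C)$ at value $C(M)$, and our $\tilde c$ attains $\tilde c(M)=z(M)=C(M)$.

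For the price of anarchy, fix any Nash equilibrium $\C S=(M,C,\tilde c,r_0)$. The penalties force $M$ to be a hitting set. The statements and proofs of Lemma~\ref{lemma:hs-noprotected} (no protected mafiosi) and Lemma~\ref{lemma:hs-lowdemand} ($D(v)\le\frac{d}{d-1}\tilde c(v)$ for $v\in C$) carry over essentially verbatim, since both arguments only manipulate the linear budgets $\tilde c(v)$ of individual agents and never invoke any global structure of $C$. The extra step is the Godfather: since $\C S$ is an equilibrium and his utility is the total budget of the Mafia, he maximizes $\tilde c(M)$ over $B(C)$, whence $\tilde c(M)=C(M)$, i.e. $M$ is tight for $\tilde c$. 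Now set $y(S)=\sum_{v\in M\cap S}r(v,S)$ and $z(u)=\sum_{S\in\C N(u)}y(S)$; the two lemmas give $z(v)\le d\cdot\tilde c(v)$ for every $v\in V$, so $\tfrac1d z\le\tilde c$ coordinatewise and hence $\tfrac1d z(Z)\le\tilde c(Z)\le C(Z)$ for all $Z$, i.e. $\tfrac1d z\in P(C)$ and $\tfrac1d y$ is feasible for (\ref{prog:DSHS}). Its objective is $\tfrac1d\sum_S y(S)=\tfrac1d\sum_{m\in M}\tilde c(m)=\tfrac1d\tilde c(M)=\tfrac1d C(M)$, so weak duality between (\ref{prog:PSHS}) and (\ref{prog:DSHS}) gives that $C(M)$ is at most $d$ times the optimum, exactly as in Lemma~\ref{claim:shs-approx}.

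The main obstacle is isolating the role of the Godfather. The low-demand lemma is technically the most delicate ingredient, but it is inherited unchanged from the linear case precisely because it is local in the budgets. The genuinely new work is submodular: arguing that the Godfather's best response turns $M$ into a tight set via $\max_{\tilde c\in B(C)}\tilde c(M)=C(M)$ (for the price of anarchy), and that a dual vector $z\in P(C)$ with $M$ tight extends to a base $\tilde c\in B(C)$ agreeing with $z$ on $M$ (for existence). These two facts are where monotone submodularity of $C$ enters, and everything else reduces to the already-established linear analysis.
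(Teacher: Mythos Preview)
Your proposal is correct and follows essentially the same approach as the paper: construct $\tilde c\in B(C)$ by raising $z$ outside $M$, verify the vertex agents exactly as in the linear case, and for both existence and the price of anarchy use that the Godfather's best response forces $\tilde c(M)=C(M)$. Your feasibility argument for $\tfrac1d y$ via the coordinatewise bound $\tfrac1d z\le\tilde c$ is in fact spelled out more carefully than the paper's one-line assertion.
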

The proof (see Appendix B) reduces to the linear Mafia Hitting Set Game, exploiting the
fact the whenever the Godfather has no incentive to change his
strategy, from the perspective of the vertex players it is identical
to a linear game with fixed budgets $\tilde c$.
}{

\smallskip

\noindent\textbf{Existence of a Nash equilibrium and bounding the Price of
  Anarchy. }
As for vertex cover and hitting set, we show that a complementary
solution $(M,y)$ to (\ref{prog:PSHS}) and (\ref{prog:DSHS}) provides a
solution in Nash equilibrium. Let $z(u)=\sum_{S\in \C N(u)} y(S)$.
Note that $z\in P(C)$ and $M$ is tight for $z$.
Let us raise the $z(v)$ values for $v\in C$ arbitrarily in order to get
a vector in the base polyhedron $B(C)$. Let $\tilde c$ denote such a vector.

\begin{lemma}\label{lemma:shs-dualne}
Let us define strategies in the Mafia Hitting Set Game based on a
complementary pair $(M,y)$ with $M$ being the Mafia and $V\setminus M$ the civilians. Let the
Godfather assign the budget vector $\tilde c$ as defined above.
For $u\in M$ and $S\in \C N(u)$, define $r_0(u,S)=y(S)/\tilde c(u)$.
Then the strategy profile $\C S=(M,C,\tilde c,r_0)$ is a Nash equilibrium.
\end{lemma}
\begin{proof}
The Godfather has no incentive to change as by $\tilde c(M)=C(M)$, he
already receives the maximum possible utility for the given $M$.
By the definition, $r(u,S)=y(S)$ for each $v\in V$,  hence $D(v)\le
\frac{1}{d-1}\sum_{S\in \C N(v)}y(S)|(S\cap M)\setminus\{v\}|\le
\tilde c(v)$ and therefore there are no protected mafiosi.
The proof that nobody has an incentive to change his strategy is the
same as for Lemma~\ref{lemma:vc-dualne}.
\nottoggle{full}{\proofbox}{}\end{proof}

\begin{theorem}
The price of anarchy for the Submodular Mafia Hitting Set Game is $d$.
\end{theorem}
\begin{proof}
Consider a strategy profile $\C S=(M,C,\tilde c,r_0)$ in a Nash equilibrium.
We can repeat the entire argument of Section~\ref{sec:hs} to show
that there are no protected mafiosi and that every civilian is
demanded at most $\frac{d}{d-1} \tilde c$. This is since if the Godfather
does not change his strategy, the game is identical to the linear game
with fixed budgets $\tilde c$ from the perspective of the vertex agents.

Finally we define $y(S)=\sum_{m\in M\cap S}r(m,S)$, and then $\frac 1d y$
is a feasible dual solution for (\ref{prog:DSHS}).
Observe that
$\tilde c(M)=C(M)$, as otherwise the Godfather would have incentive to
decrease the budgets of certain civilians and increase for certain
mafiosi.
Consequently,
$C(M)=\tilde c(M)=\sum_{S\in \C E}y(S)$, showing that $M$ is a
$d$-approximate solution for (\ref{prog:PSHS}).
\end{proof}
}
\section{Convergence to Nash equilibrium}\label{sec:dynamics}
\iftoggle{full}{
In this section, we investigate if the Mafia Games defined in the
previous section converge under certain best response dynamics.
We first show that already in the Mafia Vertex Cover Game, a round
robin best response dynamics may run into a loop.

Motivated by this
example, we modify the utilities by adding a secondary payoff, that
instigates the mafiosi to use symmetric ransoms: $r(u,v)=r(v,u)$.
With this secondary objective, we show that a single round of best
response dynamics under a simple selection rule results in a
Nash-equilibrium.
This dynamics simulates the
Bar-Yehuda--Even algorithm.
An analogous result is then proved for hitting set.
Finally, we discuss possible extensions for the submodular case.

\subsection{Vertex cover}
Let us now show an example where a round robin dynamics does not
necessarily converge.
Consider a star on the vertices $v_1,v_2,v_3,v_4$ and the central vertex
$z$.
Assume we are playing round robin in the order $z,v_1,v_2,v_3,v_4$.
Let $c(v_i)=1$ for $i=1,2,3$ and $c(z)=2$, and let us start with the
strategy profile where  $M=\{z\}$, $r(z,v_1)=2$.
Assume that whenever $z$ can change his strategy to get a higher
utility, he always chooses to demand his entire budget 2 from one of
the civilians among $v_1,v_2,v_3,v_4$ (this is always a best response).

We claim that this will always be possible as $z$ always stays in the
Mafia, and at most 3 vertices among
$v_1,v_2,v_3,v_4$ will be in the Mafia at the same time.
Indeed, a civilian will enter only if being ransomed by
$z$. If $v_i$ is in the Mafia then his only option is
setting $r(v_i,z)=1$, and thus if $z$ has at least 3 neighbors in the
Mafia, he becomes protected and thus all his neighbors he is not
actually ransoming will have an
incentive to leave.

The dynamics never reaches a Nash equilibrium, as if $z$ is ransoming a
mafioso $v_i$,  he has incentive to change to ransoming a civilian as
$v_i$ is protected. On the other hand,  if $z$ ransoms a civilian $v_i$, $v_i$
has an incentive to join the
Mafia to obtain the protected status.

\iftoggle{full}{\smallskip}{}

If we could incentivize $z$ to change his
strategy less drastically and ransom the other players by at most 1, 
we could rapidly reach a Nash-equilibrium.
To enforce such a behavior, we introduce a secondary utility
function.}{
In this section, we investigate if the Mafia Vertex Cover Game from
the best response dynamics perspective. An example (described in
Appendix C) shows
that a best response dynamics can run into a loop.
 The problem is due to assymetric ransoms between mafiosi.
We introduce the secondary utilities motivated by this phenomenon.%

}
For a strategy profile $\C S=(M,C,r)$,  $U_{\C S}(v)$ is the
utility as defined in Section~\ref{sec:vc}.
Let us define $\tilde U_{\C S}(v)=0$ if $v\in C$ and
\iftoggle{full}{
\[
\tilde U_{\C S}(v)=-\sum_{uv\in E,u\in M} |r(u,v)-r(v,u)|
\]}{%
$\tilde U_{\C S}(v)=-\sum_{uv\in E,u\in M} |r(u,v)-r(v,u)|$%
}
if $u\in M$. The total utility is then $(U_{\C S}(v),\tilde U_{\C
  S}(v))$ in the lexicographic ordering: the agents' main objective is
to maximize $U_{\C S}(v)$, and if that is the same for two
outcomes, they choose the one maximizing $\tilde U_{\C
  S}(v)$. \iftoggle{full}{In the above example, the dynamics would reach an
equilibrium in the second round, with $r(z,v_i)\le 1$ for all $i$.}{}

 $\tilde U_{\C S}(v)\le 0$ and equality holds  if
 $r(u,v)=r(v,u)$ for every $uv\in M$, $u,v\in M$. Therefore all
 results in Section~\ref{sec:vc} remain valid: in
 Lemma~\ref{lemma:vc-dualne} we define a strategy profile where
 $\tilde U_{\C S}(v)= 0$ for all agents, hence it also gives a Nash equilibrium for
 the extended definition of utilities. The secondary utility term
 $\tilde U$ does not affect the proofs in Section~\ref{sec:vc}.

Consider now the following simple dynamics: {\em Start from the strategy profile
where all agents are civilians. In each step, take an agent who is
incident to uncovered edge and subject to this, minimizes
$c(v)-D(v)$, and give him the opportunity to change his
strategy.}

\begin{theorem}\label{thm:vc-dynamics}
After each agent changing his strategy at most once, we obtain a
strategy profile in Nash equilibrium. \nottoggle{full}{\proofbox}{}
\end{theorem}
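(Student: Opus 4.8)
The plan is to carry the following invariant through the dynamics: at every point, each civilian $u$ satisfies $D(u)\le c(u)$, and each mafioso is tight, unprotected, and uses ransoms that are symmetric toward his mafioso neighbors (so that $\tilde U=0$ on every Mafia--Mafia edge). Initially everyone is a civilian and $D\equiv 0$, so the invariant holds; I would then show each step preserves it, which makes the run mimic the Bar-Yehuda--Even algorithm of Section~\ref{sec:vc}.

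First I would pin down the best response of a selected agent $v$. Being incident to an uncovered edge, $v$ pays the prohibitive penalty $T$ as a civilian, so joining the Mafia strictly dominates. Since $D(v)\le c(v)$ by the invariant, $v$ is not protected and has $c(v)-D(v)\ge 0$ to spare after matching each mafioso neighbor $m$ via $r(v,m)=r(m,v)$. Because the secondary utility $\tilde U$ penalizes only asymmetry toward mafiosi, while income from any unprotected neighbor is collected in full, $v$'s optimal move is exactly to match the mafiosi symmetrically and distribute the remaining $c(v)-D(v)$ among his neighbors on (previously) uncovered edges, all of which are civilians.

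The heart of the argument, and the step I expect to be the main obstacle, is showing this move preserves the invariant, i.e.\ that nobody becomes protected. For a matched mafioso $m$, tightness together with symmetry among the established mafiosi gives $c(m)-D(m)=\sum_{u\in C\cap N(m)}r(m,u)\ge r(m,v)$; hence after $v$ reciprocates, the updated demand $D(m)+r(m,v)$ still does not exceed $c(m)$, and $m$ remains tight, unprotected, and symmetric. For a civilian recipient $u$ of $v$, the edge $uv$ was uncovered before the move, so $u$ was itself a candidate, and the selection rule yields $c(u)-D(u)\ge c(v)-D(v)$; since $v$ hands out a total of only $c(v)-D(v)$, even dumping all of it on a single $u$ keeps $D(u)\le c(u)$. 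This is precisely where minimizing $c(v)-D(v)$ over candidates is essential.

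Finally I would close the loop. Once $v$ is a tight, unprotected mafioso whose neighbors all pay in full, his income equals $c(v)$ and his utility is fixed at $-D(v)$ no matter how he reallocates ransoms, while leaving the Mafia cannot strictly help; later moves by others only turn a civilian neighbor into a symmetric matcher, changing $D(v)$ but never $v$'s set of best responses (he keeps his symmetric ransoms and stays put). Thus each agent switches from civilian to mafioso at most once and never reverts, so the dynamics terminates in at most $|V|$ steps, namely when $M$ becomes a vertex cover. At termination every remaining civilian has no incident uncovered edge and satisfies $D\le c$, so joining would yield utility at most $-D$, exactly as argued in Lemma~\ref{lemma:vc-dualne}; combined with the mafiosi having no profitable deviation, the resulting profile is a Nash equilibrium.
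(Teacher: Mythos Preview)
Your proposal is correct and follows essentially the same argument as the paper: you maintain the invariant $D(v)\le c(v)$ and $\tilde U_{\mathcal S}(v)=0$ for all $v$, verify that the selected agent's best response is to enter the Mafia with symmetric ransoms toward existing mafiosi and put the slack $c(v)-D(v)$ on civilian neighbors, and use the minimality of $c(v)-D(v)$ to show no civilian neighbor becomes over-demanded. Your treatment of the mafioso case is in fact slightly more explicit than the paper's (you spell out $c(m)-D(m)=\sum_{u\in C\cap N(m)}r(m,u)\ge r(m,v)$, whereas the paper just says $D(z)$ is bounded by the outgoing ransom of $z$ via symmetry), but the structure is identical.
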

\nottoggle{full}{The proof is deferred to Appendix C. The resulting
  dynamics is}{
\begin{proof}
 By induction, we shall prove that in every
step, $c(v)\ge D(v)$ and $\tilde U_{\cal S}(v)=0$ for all $v\in V$.
Consider the next move, when a player $v$ incident to some uncovered
edges minimizing $c(v)-D(v)$ moves.
He obviously has to enter the Mafia, and can achieve a maximal
(primary and secondary) utility if he sets $r(v,u)=r(u,v)$ for any
$u\in M\cap N(v)$, and distributes the rest of his ransoms arbitrarily to his
civilian neighbors. Note that this can always be done because
$c(v)\ge D(v)$. Also, note that the total ransom $v$ will demand from
other civilians is $c(v)- D(v)$. By the extremal choice of $v$, it
follows that none of his civilian neighbors $z$ will violate $c(z)\ge
D(z)$. This also remains true if $z\in M$, as $D(z)$ is at most
the total ransom $z$ demands due to the symmetry of the ransoms.

Hence the induction hypothesis is maintained by an arbitrary best
response of $v$. A mafioso who is not protected and has secondary
objective 0 has no incentive to change his strategy. Also, a civilian
$v$ with $c(v)\ge D(v)$ has no incentive to join the Mafia if there
are no uncovered edges incident to $v$. Consequently, the game ends
after all uncovered edges are gone, and once an agent joins to Mafia, he
would not change his strategy anymore.
\end{proof}

Observe that the dynamics is} closely related to the
Bar-Yehuda--Even algorithm: if the next agent always ransoms only one
of its civilian neighbors, then it corresponds to a possible performance
of the algorithm.

\iftoggle{full}{\smallskip}{}
The above dynamics can be naturally interpreted in a distributive
manner. In the proof of Theorem~\ref{thm:vc-dynamics}, we only use that the vertex $v$ changing his
strategy {\em is a local minimizer} of $c(v)-D(v)$.
The simultaneous move of two agents $u$ and $v$ could interfere only
if $uv\in E$ or
they have a neighbor $t$ in common. In this case, $c(t)<D(t)$ could
result if both $u$ and $v$ start ransoming $t$ simultaneously.

We assume that the agents have a hierarchical ordering
$\prec$: $u\prec v$ expresses that $v$ is more powerful than
$u$. We call an agent $v$ {\em a local minimizer} if $v\in C$, $v$ is incident to
some uncovered edges, and
$c(v)-D(v)\le c(u)-D(u)$ whenever $u\in C$, $uv\in E$. A local
minimizer $v$ is then called {\em eligible} if $u\prec v$ for all local
minimizers $u$ whose distance from $v$ is at most 2.

We start from $C=V$. In each iteration of the dynamics, we let all eligible agents change their
strategy to a best response simultaneously. As in the proof of Theorem~\ref{thm:vc-dynamics},
$c(v)-D(v)\ge 0$ is maintained for all $v\in V$, and thus the
dynamics terminates after each agent changes his strategy at most once.

\iftoggle{full}{
There are multiple distributed algorithms in the literature for vertex
cover, e.g. \cite{khuller94,grandoni05,Koufogiannakis09}. The
distributed algorithm by Koufogiannakis and Young
\cite{Koufogiannakis09} computes in $O(\log n)$ rounds a
2-approximation in expectation with high probability. In contrast, }{
In contrast to efficient distributed algorithms for vertex cover in
the literature (e.g. \cite{Koufogiannakis09}),}
we cannot give
good bounds on the number of iterations of our distributed
dynamics. For example, if the graph is a path $v_1\ldots v_n$,
 and the budgets are $c(v_i)=i$, then only agent $i$ will move in
 step $i$. Yet we believe that our dynamics could be practically efficient.

\iftoggle{full}{

\subsection{Hitting set}
The natural generalization of the secondary objective for hitting set
is as follows. For a club $S\in \C E$, let $var(S)$ denote the maximum
difference between ransoms on this edge. That is, $var(S)=0$ if
$|S\cap M|\le 1$ and $var(S)=\max_{v\in S\cap M}r(v,S)-\min_{v\in
  S\cap M}r(v,S)$ otherwise.
For a strategy profile $\C S=(M,C,r)$, let $\tilde U_{\C S}(v)=0$ if
$v\in C$ and
\[
\tilde U_{\C S}(v)=-\sum_{v\in \C N(v)}var(S)
\]
if $v\in M$. The utility of an agent is then $(U_{\C S}(v),\tilde
U_{\C S}(v))$, under lexicographic ordering.

A natural expectation would be to prove rapid convergence as for
vertex cover, if always the agent minimizing $c(v)-D(v)$  is allowed
to play. However, the Bar-Yehuda--Even algorithm does not seem to be
modeled by this dynamics.
Instead, we define a slightly different extremal choice of the next
agent. Let
\[D^*(v)=\sum_{S\in \C N(v)} \max_{m\in S\cap
  M}r(m,S),\]
 that is, for each club $S$ we consider the largest ransom
demanded in this club. Note that $D(v)\le D^*(v)$.
Let us consider the following dynamics. {\em We start from the strategy
profile where everyone is civilian, and we always let a civilian play
next who is contained in an uncovered club. Among them, we let
the one play who minimizes $c(v)-D^*(v)$.}

\begin{theorem}\label{thm:hs-dynamics}
After each agent changing his strategy at most once, we obtain a
strategy profile in Nash equilibrium.
\end{theorem}
\begin{proof}
We prove by induction, that in every step,
 $c(v)\ge D^*(v)$ and $\tilde U_{\cal S}(v)=0$ for all $v\in V$.
Note that this implies that there are no protected mafiosi.
If $M$ is not a hitting set, we let a $v$ minimizing $c(v)-D^*(v)$ play.
$\tilde U_{\cal S}(m)=0$ for all $m\in M$ means that for every club $S$,
$r(m,S)$ is equal for every $m\in M\cap S$; let $r_S$ denote this
common value.
As for the vertex cover case, the best responses of $v$ are to set
$r(v,S)=r_S$ whenever $S$ was already covered by the Mafia, and
to distribute the remaining ransoms arbitrarily on the hyperedges
covered only by $v$.

As $D^*(v)=\sum_{S\in \C N(v)}r_S$, the remaining
amount $v$ distributes is exactly $c(v)-D^*(v)$. Then by the choice of
$v$, $c(z)\ge D^*(z)$ shall be maintained for every civilian $z$, and
also for other mafiosi (note that if $z\in M$, then $D^*(z)$ does not change).
It can be seen analogously as for vertex cover, that we have a Nash
equilibrium if there are no more
uncovered clubs.
\nottoggle{full}{\proofbox}{}\end{proof}

Similarly to the vertex cover case, this dynamics essentially simulates the
Bar-Yehuda--Even algorithm. Also, an analogous distributed interpretation can be
given.

\subsection{Submodular hitting set}
One would expect that the Submodular Mafia Hitting Set Game also
converges under some dynamics that simulates the primal-dual algorithm
by Iwata and Nagano \cite{Iwata09}. However, if the Godfather does not
have a secondary utility, the following example shows that it can
run into a loop even in very simple instances.

Let $V=\{a,b\}$, $C(\{a\})=C(\{b\})=C(\{a,b\})=1$, and let $g$ be the
Godfather. Let $\C E=\{\{a,b\}\}$; for simplicity, we use the notation
of vertex cover, e.g. $r_0(a,b)$ denotes $r_0(a,\{a,b\})$.
Let us start from the strategy profile $C=V$, $\tilde c(a)=1$, $\tilde
c(b)=0$, and play a round robin in the order $a,b,g$.

First, $a$ enters Mafia and sets $r_0(a,b)=1$. Then $b$ also enters to
receive the protected status and sets $r_0(b,a)=1$.
$g$ has no incentive to move as $\tilde
c(M)=1$ is already maximal. In the next round, $a$ is happier if he
leaves Mafia; $b$ has no incentive to change, however $g$
modifies to $\tilde c(a)=0$ and $\tilde c(b)=1$. This will lead to a
loop: $a$ enters again in next round, $b$ leaves, $\tilde c$ is
changed again, etc.

The above behavior can be avoided by introducing a secondary
utility for $g$: let $\tilde U_{\C S}(g)=\sum_{v\in M}F^+(v)$, that
is, the sum of the actual incomes of the mafiosi. Note that $\tilde
U_{\C S}(g)\le \tilde c(M)$ and equality holds if and only if there
are no protected mafiosi. With this secondary utility, after both $a$
and $b$ enter Mafia, $g$ will modify to $\tilde c(a)=\tilde
c(b)=0.5$, giving a Nash equilibrium.

We conjecture that with this secondary utility and the secondary utilities for the vertex agents
as for hitting set, rapid convergence can be shown under an
appropriate choice of the next agent.
}{%

This result is extended to the hitting set game, with a more
sophisticated choice of the next player (see Appendix C). There we
also present a bad example for submodular hitting set.}

\section{Conclusions and further research}\label{sec:concl}
We have defined games whose Nash equilibria correspond to certain
covering problems, with the price of anarchy matching the best
constant factor approximations. The payoffs in these games are locally
defined, and the analysis is based on the LP relaxations of the
corresponding covering problems. An intriguing question is if a
similar game theoretic
approach could be applied for further combinatorial optimization
problems.

The first natural direction would be to extend our approach to a broader class of
covering games.
The most general approximation result on covering games is
\cite{Koufogiannakis09greedy}, giving a
$d$-approximation algorithm for minimizing a
submodular function under monotone constraints,
each constraint dependent on at most $d$ variables. As a first
step, one could study hitting set with the requirement that each
hyperedge $S$ must be covered by at least $h(S)\ge 1$ elements;
a simple primal-dual algorithm was given in \cite{hall86}. However,
extending our game even to this setting does not seem straightforward.

One could also try to formulate analogous settings for classical optimization problems such as
facility location, Steiner-tree or knapsack. One inherent difficulty
is that in our analysis, it seems to be crucial that any greedily
chosen maximal feasible dual solution gives a good
approximation. Also, we heavily rely on the fact  that each constraint
contains at most $d$ variables.

\iftoggle{full}{\smallskip}{}
In Section~\ref{sec:dynamics}, we have shown that the best response
dynamics rapidly converges for vertex cover and hitting set under
certain  assumptions. Stronger convergence results might hold: for
example, it is open if  arbitrary round robin best response dynamics
converge to a Nash equilibrium. For the Submodular Mafia Hitting
  Set Game, we do not even have the weaker convergence result.


\iftoggle{full}{\smallskip}{}

\noindent\textbf{Acknowledgements}
We would like to thank Jarik Ne\v set\v ril for inspiring us to work
on this problem and for a generous support in all directions.


\bibliographystyle{abbrv}
\iftoggle{full}{
\bibliography{Mafia}
}{\bibliography{Mafia-icalp}}

\nottoggle{full}{%
\newpage

\section*{Appendix A: Set cover and hitting set}

We can focus on $d$-uniform hypergraphs without loss of generality due
to the following construction.
Take an arbitrary instance  of the hitting set problem ${\cal G}=(V,{\cal
  E})$, and  let $T>d\sum_{v\in V}c(v)$. Extend $V$ by $d-1$
new vertices of cost $T$, and for every $S\in {\cal E}$, extend $S$ by any
$d-|S|$ new elements. If there is a $d$-approximate solution
to the modified instance, it cannot contain any of the new
elements. Hence finding a $d$-approximate solution is equivalent
in the original and in the modified instance.

In what follows, we present the proof of
Theorem~\ref{thm:hitting-main-icalp}, first proving the existence of a
Nash equilibrium, then bounding the price of anarchy.
The nontrivial new part is Lemma~\ref{lemma:hs-lowdemand} bounding
$D(v)$ for civilians.

The standard LP-relaxation extends the formulations (\ref{prog:PVC})
and (\ref{prog:DVC}).
\begin{multicols}{2}{%
\noindent%
\begin{align}%
\min& \sum_{v\in V} c(v) x(v) \tag{P-HS}\label{prog:PHS}\\%
\sum_{u\in S}x(u)&\ge 1 \quad \forall S\in \C E\notag\\
x&\ge0\notag%
\end{align}
}{%
\begin{align}%
\max& \sum_{S\in\C E} y(S) \tag{D-HS}\label{prog:DHS}\\%
\sum_{S\in \C N(u)} y(S) &\le c(u) \quad \forall u\in V\notag\\
y&\ge0\notag%
\end{align}%
}%
\end{multicols}

Again, for a feasible dual solution $y$, $v\in V$ is called tight if
the corresponding inequality in (\ref{prog:DHS}) holds with equality.
A pair $(M,y)$ of a hitting set $M$ and a feasible dual $y$ is called
a {\em complementary} pair if the dual inequality corresponding to any
$v\in M$ is tight.
The following simple claim generalizes Lemma~\ref{claim:vc-approx}.
\begin{lemma}\label{claim:hs-approx}
If $(M,y)$ is a complementary pair, then $M$ is a $d$-approximate
solution to the hitting set problem.\proofbox
\end{lemma}
The algorithm of Bar-Yehuda and Even \cite{Baryehuda81}, outlined in
Section~\ref{sec:vc} naturally extends to the hitting set
problem, and delivers a complementary pair. Hence the following lemma
proves the existence of a pure Nash equilibrium.

\begin{lemma}\label{lemma:hs-dualne}
Let us define strategies in the Mafia Hitting Set Game based on a
complementary pair $(M,y)$ as follows. Let agents in $M$ be the Mafia and $V\setminus M$ be
the civilians. For each $v\in M$, define $r(v,S)=y(S)$ for every $S\in\C E$
containing $v$.
Then the strategy profile $\C S=(M,C,r)$ is a Nash equilibrium.
\end{lemma}

\begin{proof}
For each $v\in V$,  $D(v)\le \frac{1}{d-1}\sum_{S\in \C
  N(v)}y(S)|(S\cap M)\setminus\{v\}|\le c(v)$ and therefore there are no protected mafiosi.
The proof that nobody has an incentive to change his strategy is the
same as for Lemma~\ref{lemma:vc-dualne}. 
\proofbox\end{proof}

\noindent\textbf{The Price of Anarchy. }
\begin{lemma}\label{lemma:hs-noprotected}
Let the strategy profile $\C S=(M,C,r)$ be a Nash equilibrium.
Then there are no protected mafiosi.
\end{lemma}

\begin{proof}
The proof follows the same lines as for
Lemma~\ref{lemma:vc-noprotected}. For a contradiction, assume $P\neq \emptyset$.
First, it is easy to show that there exists an unprotected $m\in M\setminus P$ and $S\in\C E$, $S\cap P\neq\emptyset$,
such that $r(m,S)>0$ by comparing the total in-demand and
out-demand of protected mafiosi.
For such $m$, if there exists no set $S'\in\C E$ with
 $S'\cap M = \{m\}$, then he could increase his utility by leaving the
 Mafia.
Otherwise, he could increase his utility by decreasing $r(m,S)$ and
increasing $r(m,S')$.
{\proofbox}\end{proof}

The following lemma is the analogue of Lemma~\ref{lemma:vc-lowdemand},
yet the proof is more complicated.
\begin{lemma}\label{lemma:hs-lowdemand}
Let the strategy profile $\C S=(M,C,r)$ be a Nash equilibrium and let $v\in C$.
Then $D(v)\le \frac{d}{d-1} c(v)$.
\end{lemma}

\begin{proof}
Suppose the contrary, let there be a $v\in C$ such that $D(v)>\frac{d}{d-1}c(v)$.
His current utility is $U_{\C S}(v)=-D(v)$.

We show that $v$ could join Mafia and set ransoms that provide him
a strictly larger utility. If $F_{\C S'}^+(v)$ is the income for such a
strategy profile $\C S'$, then $U_{\C S'}(v)=F_{\C S'}^+(v)-2c(v)$,
as he would obtain the protected status. To get $U_{\C S'}(v)>U_{\C
  S}(v)$, we need to ensure $F_{\C S'}^+(v)>2c(v)-D(v)$.
As $D(v)>\frac{d}{d-1} c(v)$ is assumed, it suffices to give an $\C S'$ with
\begin{equation}\label{eq:income-bound}
F_{\C S'}^+(v)\ge \frac{d-2}{d-1}c(v).
\end{equation}
We define the ransoms $r'(v,S)$ by ``stealing'' the strategies of the other
mafiosi.
That is, for each club $S\in \C N(v)$, $\frac{1}{d-1}\sum_{m\in M\cap
  S}r(m,S)$ is the total ransom $v$ has to pay to the members of this
club. We define $r'(v,S)$ proportionally to this amount:
\[
r'(v,S):=\frac{c(v)}{(d-1)D(v)}\sum_{m\in M\cap   S}r(m,S).
\]
By Lemma~\ref{lemma:hs-noprotected}, we know that there are
no protected mafiosi in the Nash equilibrium $\C S$.
We show that after $v$ enters Mafia, even if some of the old mafiosi
become protected, they are only slightly overcharged.
More precisely, we shall show that
\begin{equation}\label{eq:overcharge}
D'(t)\le \frac{d}{d-1} c(t)\ \ \forall t\in M.
\end{equation}
From this bound, (\ref{eq:income-bound}) immediately follows.
Indeed, everybody will pay at least $\frac{d}{d-1}$ fraction of
  the demands, and therefore $F_{\C S'}^+(v)\ge \frac{d-1}{d}c(v)\ge \frac{d-2}{d-1}c(v)$.

It is left to prove (\ref{eq:overcharge}). The demand of $v$ from some
$t\in M$ can be bounded as follows:
\begin{align*}
&\frac{1}{d-1}  \sum_{S\in \C N(v)\cap \C N(t)} r'(v,S) = \\
& = \frac{c(v)}{(d-1)^2D(v)}  \sum_{S\in
  \C N(v)\cap\C N(t)}
\left( r(t,S) + 
\sum_{t'\in (S\cap M)\setminus\{t\}} r(t',S) \right)\\
&  < \frac{1}{d(d-1)} (c(t)+ (d-1)D(t))  \le \frac{1}{d(d-1)} d\cdot c(t) = \frac{c(t)}{d-1}.
\end{align*}
Here we used that $D(t)\le c(t)$ as $t$ was not
protected in $\C S$. Using this fact once more, we get
\[
D'(t) \le D(t) + \frac{c(t)}{d-1} \le \frac{d}{d-1} c(t).
\]%
\nottoggle{full}{\proofbox}{}\end{proof}

\begin{theorem}
The price of anarchy for the Mafia Hitting Set Game is $d$.
\end{theorem}

\begin{proof}
Let $\C S=(M,C,r)$ be a strategy profile in a Nash equilibrium.
Then $M$ is a hitting set, as if there was an uncovered club, all members
would be unhappy due to the term $\operatorname{Pen}(v)$. We show that the
cost of $M$ is within a factor $d$ from the optimum.
Let us set $y(S)=\sum_{v\in M\cap S} r(v,S)$ for each $S\in\C E$.
Lemmas~\ref{lemma:hs-noprotected} and \ref{lemma:hs-lowdemand} easily
imply
$\sum_{S\in\C S: v\in S} y(S)\le d\cdot c(v)$ for every $v\in V$, and
thus $\frac 1d y$ is a feasible dual solution to (\ref{prog:DHS}).
Then
$$
\sum_{S\in\C S} \frac{1}{d} y(S) = \sum_{s\in \C S} \frac{1}{d} \sum_{m\in M\cap S}r(m,S)
=\sum_{m\in M} \frac{1}{d} \sum_{S\in \C N(m)} r(m,S) = \frac 1d \sum_{m\in M} c(m),
$$
showing that $M$ is a $d$-approximate solution to (\ref{prog:PHS}).\proofbox
\end{proof}

\section*{Appendix B: Submodular hitting set}
We shall present the proof of Theorem~\ref{thm:submod-main-icalp},
first showing the existence of a Nash equilibrium.
Besides the submodular base polyhedron defined in Section~\ref{sec:submod}, we
also need the notion of the
{\em submodular polyhedron} 
\[
P(C)=\{z\in \R^V: z\ge 0, z(Z)\le C(Z)\ \ \forall Z\subseteq V\}.
\]
Given a vector $z\in P(C)$, the set $Z$ is {\em tight} with respect to
$z$ if $z(Z)=C(Z)$. An elementary consequence of submodularity is that
for every $z\in P(C)$, there exists a unique maximal tight set.
Note that $B(C)\subseteq P(C)$ and $z\in P(C)$ is in $B(C)$ if and
only if $V$ is tight.

In the LP relaxation, we assign a primal variable $\xi(Z)$ to every
subset $Z\subseteq V$. In an integer solution, $\xi(Z)= 1$ if $Z$ is the
chosen hitting set and 0 otherwise.
\begin{multicols}{2}{%
\noindent%
\begin{align}
\min& \sum_{Z\subseteq V} C(Z) \xi(Z) \tag{P-SHS}\label{prog:PSHS}\\
\sum_{Z\in \C N(u)}\xi(Z)&=x(u) \quad\forall u\in V\notag\\
\sum_{u\in S}x(u)&\ge 1 \quad \forall S\in \C E \notag\\
\xi&\ge0\notag
\end{align}%
}{%
\begin{align}
\max& \sum_{S\in\C E} y(S) \tag{D-SHS}\label{prog:DSHS}\\
\sum_{S\in \C N(u)} y(S) &= z(u) \quad \forall u\in V\notag\\
z&\in P(C)\notag\\
y&\ge0\notag
\end{align}%
}%
\end{multicols}

Note that in the dual program, $y$ uniquely defines $z$. Therefore we
will say that $y$ is a feasible dual solution if the corresponding $z$
is in $P(C)$. For the special case of the (linear)
hitting set problem, where $C(Z)=\sum_{v\in Z}c(z)$ for some
$c:V\rightarrow\R_+$, this is equivalent to $y$ satisfying (\ref{prog:DHS}).

Accordingly, we say that a set $Z$ is tight for a
feasible dual $y$ if $z(Z)=P(C)$.
For a hitting set $M$ and a feasible dual solution $y$, we say that
$(M,y)$ is a {\em complementary pair} if $M$ is tight for $y$. The
following is the
generalization of Lemmas~\ref{claim:vc-approx} and \ref{claim:hs-approx}.
\begin{lemma}\label{claim:shs-approx}
If $(M,y)$ is a complementary pair, then $M$ is a $d$-approximate
solution of the hitting set problem.
\end{lemma}
\begin{proof}
The primal objective is at most $d$ times the dual objective, as
\[
C(M)=\sum_{v\in M}z(m)=\sum_{v\in M}\sum_{S\in \C N(v)}y(S)\le d\sum_{S}y(S).
\]
The inequality follows as each $S$ is counted $|S|\le d$ times.
\nottoggle{full}{\proofbox}{}\end{proof}

The algorithm by Iwata and Nagano \cite{Iwata09} is as follows.
\begin{enumerate}
\setcounter{enumi}{-1}
\item Set $y(S):=0$ for each $S\in \C E$, $z(v):=0$ for $v\in V$,  and let $M$ be the unique maximal
  set with $C(M)=0$.
\item While $M$ is not a hitting set do
\begin{enumerate}
\item Choose an arbitrary hyperedge $S\in \cal E$, $S\cap M=\emptyset$.
\item\label{step:epsilon} Compute $\varepsilon=\max\{\lambda:z+\lambda\chi_Z\in P(C)\}.$
\item Increase $y(S)$ and every $z(v)$ for $v\in Z$ by $\varepsilon$.
\item Replace $M$ by the new unique maximal tight set.
\end{enumerate}
\item Return $M$.
\end{enumerate}

In step (1-2), $\chi_Z$ is the characteristic function of $Z$. This
step can be performed in the same running time as a submodular
function minimization (see \cite{fleischer03}). Note also that $M$ will always intersect $S$ in
step (1-4) and therefore will be strictly extended. It is immediate that it
returns a complementary pair $(M,y)$ and thus
Lemma~\ref{claim:shs-approx} proves $d$-approximation.

As for vertex cover and hitting set, we show that a complementary
solution $(M,y)$ to (\ref{prog:PSHS}) and (\ref{prog:DSHS}) provides a
solution in Nash equilibrium. Let $z(u)=\sum_{S\in \C N(u)} y(S)$.
Note that $z\in P(C)$ and $M$ is tight for $z$.
Let us raise the $z(v)$ values for $v\in C$ arbitrarily in order to get
a vector in the base polyhedron $B(C)$. Let $\tilde c$ denote such a vector.

\begin{lemma}\label{lemma:shs-dualne}
Let us define strategies in the Mafia Hitting Set Game based on a
complementary pair $(M,y)$ with $M$ being the Mafia and $V\setminus M$ the civilians. Let the
Godfather assign the budget vector $\tilde c$ as defined above.
For $u\in M$ and $S\in \C N(u)$, define $r_0(u,S)=y(S)/\tilde c(u)$.
Then the strategy profile $\C S=(M,C,\tilde c,r_0)$ is a Nash equilibrium.
\end{lemma}
\begin{proof}
The Godfather has no incentive to change as by $\tilde c(M)=C(M)$, he
already receives the maximum possible utility for the given $M$.
By the definition, $r(u,S)=y(S)$ for each $v\in V$,  hence $D(v)\le
\frac{1}{d-1}\sum_{S\in \C N(v)}y(S)|(S\cap M)\setminus\{v\}|\le
\tilde c(v)$ and therefore there are no protected mafiosi.
The proof that nobody has an incentive to change his strategy is the
same as for Lemma~\ref{lemma:vc-dualne}.
{\proofbox}{}\end{proof}

\begin{theorem}
The price of anarchy for the Submodular Mafia Hitting Set Game is $d$.
\end{theorem}
\begin{proof}
Consider a strategy profile $\C S=(M,C,\tilde c,r_0)$ in a Nash equilibrium.
We can repeat the entire argument of Section~\ref{sec:hs} to show
that there are no protected mafiosi and that every civilian is
demanded at most $\frac{d}{d-1} \tilde c$. This is since if the Godfather
does not change his strategy, the game is identical to the linear game
with fixed budgets $\tilde c$ from the perspective of the vertex agents.

Finally we define $y(S)=\sum_{m\in M\cap S}r(m,S)$, and then $\frac 1d y$
is a feasible dual solution for (\ref{prog:DSHS}).
Observe that
$\tilde c(M)=C(M)$, as otherwise the Godfather would have incentive to
decrease the budgets of certain civilians and increase for certain
mafiosi.
Consequently,
$C(M)=\tilde c(M)=\sum_{S\in \C E}y(S)$, showing that $M$ is a
$d$-approximate solution for (\ref{prog:PSHS}).\proofbox
\end{proof}

\section*{Appendix C: Convergence to Nash equilibrium}
\subsection*{Vertex cover}
We first present an example showing that a simple round robin dynamics
may run into loop without the secondary objective. Then we give the
proof of Theorem~\ref{thm:vc-dynamics}.

\medskip

{\noindent\bf Example. }Consider a star on the vertices $v_1,v_2,v_3,v_4$ and the central vertex
$z$.
Assume we are playing round robin in the order $z,v_1,v_2,v_3,v_4$.
Let $c(v_i)=1$ for $i=1,2,3$ and $c(z)=2$, and let us start with the
strategy profile where  $M=\{z\}$, $r(z,v_1)=2$.
Assume that whenever $z$ can change his strategy to get a higher
utility, he always chooses to demand his entire budget 2 from one of
the civilians among $v_1,v_2,v_3,v_4$ (this is always a best response).

We claim that this will always be possible as $z$ always stays in the
Mafia, and at most 3 vertices among
$v_1,v_2,v_3,v_4$ will be in the Mafia at the same time.
Indeed, a civilian will enter only if being ransomed by
$z$. If $v_i$ is in the Mafia then his only option is
setting $r(v_i,z)=1$, and thus if $z$ has at least 3 neighbors in the
Mafia, he becomes protected and thus all his neighbors he is not
actually ransoming will have an
incentive to leave.

The dynamics never reaches a Nash equilibrium, as if $z$ is ransoming a
mafioso $v_i$,  he has incentive to change to ransoming a civilian as
$v_i$ is protected. On the other hand,  if $z$ ransoms a civilian $v_i$, $v_i$
has an incentive to join the
Mafia to obtain the protected status.

\medskip

\noindent{\em Proof of Theorem~\ref{thm:vc-dynamics}.}
 By induction, we shall prove that in every
step, $c(v)\ge D(v)$ and $\tilde U_{\cal S}(v)=0$ for all $v\in V$.
Consider the next move, when a player $v$ incident to some uncovered
edges minimizing $c(v)-D(v)$ moves.
He obviously has to enter the Mafia, and can achieve a maximal
(primary and secondary) utility if he sets $r(v,u)=r(u,v)$ for any
$u\in M\cap N(v)$, and distributes the rest of his ransoms arbitrarily to his
civilian neighbors. Note that this can always be done because
$c(v)\ge D(v)$. Also, note that the total ransom $v$ will demand from
other civilians is $c(v)- D(v)$. By the extremal choice of $v$, it
follows that none of his civilian neighbors $z$ will violate $c(z)\ge
D(z)$. This also remains true if $z\in M$, as $D(z)$ is at most
the total ransom $z$ demands due to the symmetry of the ransoms.

Hence the induction hypothesis is maintained by an arbitrary best
response of $v$. A mafioso who is not protected and has secondary
objective 0 has no incentive to change his strategy. Also, a civilian
$v$ with $c(v)\ge D(v)$ has no incentive to join the Mafia if there
are no uncovered edges incident to $v$. Consequently, the game ends
after all uncovered edges are gone, and once an agent joins to Mafia, he
would not change his strategy anymore.
\proofbox

\subsection*{Hitting set}
The natural generalization of the secondary objective for hitting set
is as follows. For a club $S\in \C E$, let $var(S)$ denote the maximum
difference between ransoms on this edge. That is, $var(S)=0$ if
$|S\cap M|\le 1$ and $var(S)=\max_{v\in S\cap M}r(v,S)-\min_{v\in
  S\cap M}r(v,S)$ otherwise.
For a strategy profile $\C S=(M,C,r)$, let $\tilde U_{\C S}(v)=0$ if
$v\in C$ and
\[
\tilde U_{\C S}(v)=-\sum_{v\in \C N(v)}var(S)
\]
if $v\in M$. The utility of an agent is then $(U_{\C S}(v),\tilde
U_{\C S}(v))$, under lexicographic ordering.

A natural expectation would be to prove rapid convergence as for
vertex cover, if always the agent minimizing $c(v)-D(v)$  is allowed
to play. However, the Bar-Yehuda--Even algorithm does not seem to be
modeled by this dynamics.
Instead, we define a slightly different extremal choice of the next
agent. Let
\[D^*(v)=\sum_{S\in \C N(v)} \max_{m\in S\cap
  M}r(m,S),\]
 that is, for each club $S$ we consider the largest ransom
demanded in this club. Note that $D(v)\le D^*(v)$.
Let us consider the following dynamics. {\em We start from the strategy
profile where everyone is civilian, and we always let a civilian play
next who is contained in an uncovered club. Among them, we let
the one play who minimizes $c(v)-D^*(v)$.}

\begin{theorem}\label{thm:hs-dynamics}
After each agent changing his strategy at most once, we obtain a
strategy profile in Nash equilibrium.
\end{theorem}
\begin{proof}
We prove by induction, that in every step,
 $c(v)\ge D^*(v)$ and $\tilde U_{\cal S}(v)=0$ for all $v\in V$.
Note that this implies that there are no protected mafiosi.
If $M$ is not a hitting set, we let a $v$ minimizing $c(v)-D^*(v)$ play.
$\tilde U_{\cal S}(m)=0$ for all $m\in M$ means that for every club $S$,
$r(m,S)$ is equal for every $m\in M\cap S$; let $r_S$ denote this
common value.
As for the vertex cover case, the best responses of $v$ are to set
$r(v,S)=r_S$ whenever $S$ was already covered by the Mafia, and
to distribute the remaining ransoms arbitrarily on the hyperedges
covered only by $v$.

As $D^*(v)=\sum_{S\in \C N(v)}r_S$, the remaining
amount $v$ distributes is exactly $c(v)-D^*(v)$. Then by the choice of
$v$, $c(z)\ge D^*(z)$ shall be maintained for every civilian $z$, and
also for other mafiosi (note that if $z\in M$, then $D^*(z)$ does not change).
It can be seen analogously as for vertex cover, that we have a Nash
equilibrium if there are no more
uncovered clubs.
\proofbox\end{proof}

Similarly to the vertex cover case, this dynamics essentially simulates the
Bar-Yehuda--Even algorithm. Also, an analogous distributed interpretation can be
given.

\subsection*{Submodular hitting set}
One would expect that the Submodular Mafia Hitting Set Game also
converges under some dynamics that simulates the primal-dual algorithm
by Iwata and Nagano \cite{Iwata09}. However, if the Godfather does not
have a secondary utility, the following example shows that it can
run into a loop even in very simple instances.

Let $V=\{a,b\}$, $C(\{a\})=C(\{b\})=C(\{a,b\})=1$, and let $g$ be the
Godfather. Let $\C E=\{\{a,b\}\}$; for simplicity, we use the notation
of vertex cover, e.g. $r_0(a,b)$ denotes $r_0(a,\{a,b\})$.
Let us start from the strategy profile $C=V$, $\tilde c(a)=1$, $\tilde
c(b)=0$, and play a round robin in the order $a,b,g$.

First, $a$ enters Mafia and sets $r_0(a,b)=1$. Then $b$ also enters to
receive the protected status and sets $r_0(b,a)=1$.
$g$ has no incentive to move as $\tilde
c(M)=1$ is already maximal. In the next round, $a$ is happier if he
leaves Mafia; $b$ has no incentive to change, however $g$
modifies to $\tilde c(a)=0$ and $\tilde c(b)=1$. This will lead to a
loop: $a$ enters again in next round, $b$ leaves, $\tilde c$ is
changed again, etc.

The above behavior can be avoided by introducing a secondary
utility for $g$: let $\tilde U_{\C S}(g)=\sum_{v\in M}F^+(v)$, that
is, the sum of the actual incomes of the mafiosi. Note that $\tilde
U_{\C S}(g)\le \tilde c(M)$ and equality holds if and only if there
are no protected mafiosi. With this secondary utility, after both $a$
and $b$ enter Mafia, $g$ will modify to $\tilde c(a)=\tilde
c(b)=0.5$, giving a Nash equilibrium.

We conjecture that with this secondary utility and the secondary utilities for the vertex agents
as for hitting set, rapid convergence can be shown under an
appropriate choice of the next agent.
}{}

\end{document}